\newcommand{\ict}{\ti{IC3}\xspace}
\newcommand{\data}{\mbox{$\mi{Data}$}\xspace}
\newcommand{\fifo}{\mbox{$\mi{Fifo}$}\xspace}
\newcommand{\prd}{\ti{PrvRed}\xspace}
\newcommand{\lrn}{\ti{Lrn}\xspace}
\newcommand{\apqe}{\mbox{$\mi{DS}$-$\mi{PQE}$}\xspace}
\newcommand{\Apqe}{\mbox{$\mi{START}$}\xspace}
\newcommand{\cad}{\mbox{$\mi{CADET}$}\xspace}
\newcommand{\Mapqe}{\mbox{$\mi{START}^*$}\xspace}
\newcommand{\Npqe}{\ti{START}\xspace}
\newcommand{\bm}[1]{{\mbox{\boldmath $#1$}}}
\newcommand{\Bm}[1]{{\boldmath $#1$}}
\newcommand{\spe}{\mbox{$\mi{Sp}$}\xspace}
\newcommand{\imp}{\Rightarrow} 
 \newcommand{\ie}{\emph{i.e.},
} 
\newcommand{\pnt}[1]{\mbox{$\vec{#1}$}\xspace}
\newcommand{\ppnt}[2]{\mbox{$\vv{#1}_{#2}$}}
\newcommand{\pqnt}[2]{\mbox{$\vv{#1}_{#2}$}}
\newcommand{\pent}[2]{\mbox{$\vv{#1}\!_{#2}$}}
\newcommand{\cof}[2]{\mbox{$#1_{\vec{#2}}$}}
\newcommand{\ccof}[3]{\mbox{$#1_{\vv{#2}\!_{#3}}$}}
\newcommand{\V}[1]{\mbox{$\mathit{Vars}(#1)$}}
\newcommand{\Va}[1]{\mbox{$\mi{Vars}(\vec{#1})$}}
\newcommand{\s}[1]{\mbox{$\{#1\}$}}
\newcommand{\nGz}[2]{$G_{non-\{z\}}$}
\newcommand{\prr}[1]{\mi{Prev}(\boldsymbol{q})}
\newcommand{\mi}[1]{\mathit{#1}}
\newcommand{\ti}[1]{\textit{#1}}
\newcommand{\tb}[1]{\textbf{#1}}
\newcommand{\ttt}{\>\>\>}
\newcommand{\Tt}{\>\>}
\newcommand{\Sub}[2]{\mbox{$\mi{#1}_{\mi{#2}}$}}
\newcommand{\Sup}[2]{\mbox{$#1^\mi{#2}$}}
\newcommand{\prob}[2]{\mbox{$\exists{#1} [#2]$}}
\newcommand{\Comment}[1]{}
\newcommand{\pqe}[4]{$\prob{#1}{#3 \wedge #4} \equiv #2 \wedge \prob{#1}{#4}$}
\newcommand{\Abs}[1]{\mbox{$\mathbb{S}_{#1}$}}
\begin{document}

\title{Partial Quantifier Elimination By Certificate Clauses}



\author{Eugene Goldberg}
\institute{\email{eu.goldberg@gmail.com}}

\maketitle

\begin{abstract}
In this report\footnote{Some new results have been produced after publishing this report. We
discuss them in Section~\ref{sec:retrosp}.

}, we study partial
quantifier elimination (PQE) for propositional CNF formulas. PQE is a
generalization of quantifier elimination where one can limit the set
of clauses taken out of the scope of quantifiers to a small subset of
target clauses. The appeal of PQE is twofold. First, PQE can be
dramatically simpler than full quantifier elimination. Second, many
verification problems (e.g. equivalence checking and model checking)
can be solved in terms of PQE.  Our approach is based on deriving
clauses depending only on unquantified variables that make the target
clauses \ti{redundant}.  Proving redundancy of a target clause is done
by construction of a ``certificate'' clause implying the former. We
describe a PQE algorithm called \Npqe that employs the approach
above. To evaluate \Npqe, we apply it to invariant generation for a
sequential circuit $N$. The goal of invariant generation is to find an
\ti{unwanted} invariant of $N$ proving unreachability of a state that
is supposed to be reachable.  If $N$ has an unwanted invariant, it is
buggy.  Our experiments with FIFO buffers and HWMCC-13 benchmarks
suggest that \Npqe can be used for detecting bugs that are hard to
find by existing methods.
\end{abstract}

\section{Introduction}
\label{sec:intro}
In this paper, we consider the following problem. Let $F_1(X,Y),
F_2(X,Y)$ be propositional formulas in conjunctive normal form
(CNF)\footnote{Every formula is a propositional CNF formula unless otherwise
stated. Given a CNF formula $F$ represented as the conjunction of
clauses $C_1 \wedge \dots \wedge C_k$, we will also consider $F$ as
the \ti{set} of clauses \s{C_1,\dots,C_k}.
} where $X,Y$ are sets of
variables. Given \prob{X}{F_1\!\wedge\!F_2}, find a quantifier-free
formula $F^*_1(Y)$ such that $\prob{X}{F_1 \wedge F_2}\equiv
F^*_1\wedge\prob{X}{F_2}$.  In contrast to quantifier elimination
(QE), only a part of the formula gets ``unquantified'' here.  So, this
problem is called \ti{partial} QE (PQE)~\cite{hvc-14,pqe_page}. We
will refer to $F^*_1$ as a \ti{solution} to PQE.  Like SAT, PQE is a
way to cope with the complexity of QE. But in contrast to SAT that is
a \ti{special} case of QE (where all variables are quantified), PQE
\ti{generalizes} QE. The latter is just a special case of PQE where
$F_2\!=\!\emptyset$ and the entire formula is
unquantified.

The appeal of PQE is twofold. First, it can be much more efficient
than QE if $F_1$ is a \ti{small} part of the formula. Second, PQE
facilitates the development of new approaches to various verification
problems like SAT~\cite{south_korea}, equivalence
checking~\cite{fmcad16}, model checking~\cite{tech_rep_pc_lor} and so
on.

We solve PQE by \ti{redundancy based reasoning}.  Its introduction is
motivated by the following observations. First, $F_1 \wedge F_2 \imp
F^*_1$ and $F^*_1 \wedge \prob{X}{F_1 \wedge F_2}$ $\equiv$ $F^*_1
\wedge \prob{X}{ F_2}$. Thus, a formula $F^*_1$ implied by $F_1 \wedge
F_2$ becomes a solution as soon as $F^*_1$ makes the clauses of $F_1$
\ti{redundant}.  Second, one can prove clauses of $F_1$
redundant\footnote{By ''proving a clause $C$ redundant'', we mean showing that $C$ is
redundant after adding (if necessary) some new clauses.
} one by one. The redundancy of a
clause $C \in F_1$ can be proved by using $(F_1 \cup F_2) \setminus
\s{C}$ to derive a clause $K$ implying $C$. We refer to $K$ as a
\ti{certificate clause}. Importantly, one can produce $K$ even if
$(F_1 \cup F_2) \setminus \s{C}$ does not imply $C$. This becomes
possible if one allows generation of clauses preserving
\ti{equisatisfiability} rather than equivalence.

We implement redundancy based reasoning in a PQE algorithm called
\linebreak\Apqe, an abbreviation of Single TARgeT. At any given
moment, \Apqe proves redundancy of only one clause (hence the name
``single target'').  \Apqe builds the certificate $K$ above by
resolving ``local'' certificate clauses implying the clause $C$ in
subspaces. Proving redundancy of $C$ in subspaces where $F_1 \wedge
F_2$ is unsatisfiable, in general, requires adding new clauses to $F_1
\wedge F_2$. The added clauses depending only on unquantified
variables form a solution $F^*_1$ to the PQE problem. \Apqe is
somewhat similar to a SAT-solver with conflict driven learning. A
major difference here is that \Apqe backtracks as soon as the target
clause is proved redundant in the current subspace (even if no
conflict occurred).

In this paper, we apply \Apqe to the problem of invariant generation
that has no general solution yet.  We use invariant generation for bug
detection as described below. Our objective here is to provide a
``proof of concept'' for PQE i.e. to give some experimental evidence
that PQE is an important direction for research. Let $N$ be a
sequential circuit to verify. As far as reachable states of $N$ are
concerned, one can have bugs of two kinds. A bug of the first kind
occurs if a bad state is reachable in $N$. A bug of the second kind
takes place if a required good state (i.e. one that is supposed to be
reachable) is \ti{unreachable} in $N$. One excludes bugs of the first
kind by checking that a set of desired invariants holds. The challenge
here is that these invariants may be hard to prove. Bugs of the second
kind are currently identified either by testing or by checking if $N$
has an \ti{unwanted} invariant. An invariant $P$ of $N$ is unwanted if
a required good state falsifies $P$ and so is unreachable in $N$.  If
$P$ holds for $N$, the latter has a bug of the second kind. The
unwanted invariants to check are currently generated manually i.e. are
guessed. So, one can easily overlook a bug of the second kind. The
main challenge here is to \ti{find} an unwanted invariant that holds
rather than the hardness of proving it true.

The main body of this paper is structured as follows. (Some additional
information is given in the appendix.)  Sections~\ref{sec:basic} and
\ref{sec:es_impl} provide some basic definitions. A description of
\Apqe is given in Sections~\ref{sec:exmp1}-\ref{sec:trg_unit}. In
Section~\ref{sec:inv_gen}, we show that PQE can be used to
\ti{automatically} generate invariants to check for being unwanted. In
Section~\ref{sec:exper1}, we use \Apqe to detect a bug of the second
kind in a FIFO buffer that is hard to find by existing
methods\footnote{In Appendix~\ref{app:exper2}, we present results
  showing that \Apqe is efficient enough to generate invariants of
  HWMCC-13 benchmarks. We also give evidence that PQE can be
  dramatically more efficient than QE.}.  Section~\ref{sec:be_test}
explains how to decide if an invariant is unwanted via test
generation. Sections~\ref{sec:bg} and \ref{sec:concl} provide some
background and conclusions. Finally, in Section~\ref{sec:retrosp}, we
discuss some new results obtained after the publication of this
report.

\vspace{-5pt}
\section{Basic Definitions}
\label{sec:basic}
We assume that every formula is in CNF unless otherwise stated. In
this section, when we say ``formula'' without mentioning quantifiers,
we mean ``a quantifier-free formula''.

%
%
\begin{definition}
  \label{def:vars} Let $F$ be a formula. Then \bm{\V{F}} denotes the
set of variables of $F$ and \bm{\V{\prob{X}{F}}} denotes
$\V{F}\!\setminus\!X$.
\end{definition}

%
%
\begin{definition}
Let $V$ be a set of variables. An \tb{assignment} \pnt{q} to $V$ is a
mapping $V'~\rightarrow \s{0,1}$ where $V' \subseteq V$.  We will
denote the set of variables assigned in \pnt{q}~~as \bm{\Va{q}}. We will
refer to \pnt{q} as a \tb{full assignment} to $V$ if $\Va{q}=V$. We
will denote as \bm{\pnt{q} \subseteq \pnt{r}} the fact that a) $\Va{q}
\subseteq \Va{r}$ and b) every variable of \Va{q} has the same value
in \pnt{q} and \pnt{r}.
\end{definition}
%
%
\begin{definition}
\label{def:cofactor}
Let $C$ be a \tb{clause} (i.e. a disjunction of literals). Let $H$ be
a formula that may have quantifiers, and \pnt{q} be an assignment to
\V{H}.  If $C$ is satisfied by \pnt{q}, then \bm{\cof{C}{q} \equiv
  1}. Otherwise, \bm{\cof{C}{q}} is the clause obtained from $C$ by
removing all literals falsified by \pnt{q}. Denote by \bm{\cof{H}{q}}
the formula obtained from $H$ by removing the clauses satisfied by
\pnt{q} and replacing every clause $C$ unsatisfied by \pnt{q} with
\cof{C}{q}.
\end{definition}

%
%
\begin{definition}
  \label{def:Xcls}
Given a formula \prob{X}{F(X,Y)}, a clause $C$ of $F$ is called a
\tb{quantified clause} if \V{C} $\cap~X~\neq~\emptyset$. If $\V{C}
\cap X = \emptyset$, the clause $C$ depends only on free
i.e. unquantified variables of\, $F$ and is called a \tb{free clause}.
\end{definition}

%
%
\begin{definition}
\label{def:formula-equiv}
Let $G, H$ be formulas that may have existential quantifiers. We say
that $G, H$ are \tb{equivalent}, written \bm{G \equiv H}, if $\cof{G}{q} =
\cof{H}{q}$ for all full assignments \pnt{q} to $\V{G} \cup \V{H}$.
\end{definition}

%
%
\begin{definition}
\label{def:red_cls}
Let $F$ be a formula and $G \subseteq F$ and $G \neq
\emptyset$. Formula $G$ is \textbf{redundant in} \bm{\prob{X}{F}} if
$\prob{X}{F} \equiv \prob{X}{F \setminus G}$.
\end{definition}

%
%
\begin{definition}
 \label{def:pqe_prob} Given a formula \prob{X}{F_1(X,Y)\!\wedge\!
 F_2(X,Y)}, the \tb{Partial Quantifier Elimination} (\tb{PQE}) problem
 is to find $F^*_1(Y)$ such that \Bm{\prob{X}{F_1\wedge F_2}\equiv
 F^*_1\wedge\prob{X}{F_2}}.  (So, PQE takes $F_1$ out of the scope of
 quantifiers.)  $F^*_1$ is called a \tb{solution} to PQE. The case of
 PQE where $F_2=\emptyset$ is called \tb{Quantifier Elimination}
 (\tb{QE}).
\end{definition}
\begin{remark}
\label{rem:noise}
Let $C$ be a clause of a solution $F^*_1$ to the PQE problem above.
If $F_2$ implies $C$, then $F^*_1 \setminus \s{C}$ is a solution too.
\end{remark}

\section{Extended Implication And Blocked Clauses}
\label{sec:es_impl}
One can introduce the notion of implication via that of redundancy.
Namely, $F \imp G$, iff $G$ is redundant in $F \wedge G$ i.e. iff $F
\wedge G \equiv F$.  We use this idea to extend the notion of
implication via redundancy in a \ti{quantified} formula.
%
%
\begin{definition}
\label{def:es_impl}
Let $F(X,Y)$ and $G(X,Y)$ be formulas and $G$ be redundant in
\prob{X}{F \wedge G} i.e. \prob{X}{F \wedge G} $\equiv$
\prob{X}{F}. Then \cof{(F \wedge G)}{y} and \cof{F}{y} are
equisatisfiable for every full assignment \pnt{y} to $Y$.  So, we will
say that $F$ \tb{es-implies} $G$ in \prob{X}{F \wedge G}. (Here ``es''
stands for ``equisatisfiability''.)  A clause $C$ is called an
\tb{es-clause} in \prob{X}{F \wedge C} if $F$ es-implies $C$ in
\prob{X}{F \wedge C}.  One can view es-implication as a weaker version
of regular implication.
\end{definition}

Note that if $F$ implies $G$, then $F$ also es-implies $G$ in
\prob{X}{F \wedge G}. However, the converse is not true. We will say
that $F$ es-implies $G$ without mentioning the formula \prob{X}{F
  \wedge G} if the latter is clear from the context.
%
%
\begin{definition}
\label{def:resol}
Let clauses $C'$,$C''$ have opposite literals of exactly one variable
$w\!\in\!\V{C'}\!\cap\!\V{C''}$.  Then $C'$,$C''$ are called
\tb{resolvable} on~$w$.  The clause $C$ having all literals of
$C',C''$ but those of $w$ is called the \tb{resolvent} of
$C'$,$C''$. The clause $C$ is said to be obtained by \tb{resolution}
on $w$.
\end{definition}

Clauses $C',C''$ having opposite literals of more than one variable
are considered \tb{unresolvable} to avoid producing a tautologous
resolvent $C$ (i.e. $C \equiv 1$).

%
\begin{definition}
  \label{def:blk_cls}
Given a formula \prob{X}{F(X,Y)}, let $C$ be a clause of $F$. Let $G$
be the set of clauses of $F$ resolvable with $C$ on a variable $w \in
X$. Let $w=b$ satisfy $C$, where $b \in \s{0,1}$. We will call $C$
\tb{blocked} in \prob{X}{F} at $w$ if $G$ is redundant in \prob{X}{F}
in subspace $w=b$ (i.e. if $G_{w=b}$ is redundant in \prob{X}{F_{w=b}}).
\end{definition}
%
%
\begin{remark}
\label{rem:blk_cls}
Note that if $G = \emptyset$ or the clauses of $G$ are removed from
\prob{X}{F} as redundant, $C$ meets the \ti{original} definition of a
blocked clause~\cite{blocked_clause}. Definition~\ref{def:blk_cls}
allows to declare $C$ blocked \ti{without} removing clauses of $G$ if
a proof of their redundancy in \prob{X}{F} is available. This feature
is used by our PQE-solver \Apqe (see Remark~\ref{rem:loc_rem} of
Section~\ref{sec:trg_unit}).
\end{remark}

%
%

\begin{proposition}
\label{prop:blk_clause}
Given a formula \prob{X}{F(X,Y)}, let $C$ be a clause blocked in
\prob{X}{F} at $w \in X$. Then $C$ is redundant in \prob{X}{F}
i.e. \prob{X}{F}\,$\equiv$ \prob{X}{F \setminus \s{C}}. So,
$C$ is es-implied by $F \setminus \s{C}$ in \prob{X}{F}.
\end{proposition}

Proofs of the propositions are given in Appendix~\ref{app:proofs}.

\section{A Simple Example Of How \Apqe Operates}
\label{sec:exmp1}

In this paper, we introduce a PQE algorithm called \Apqe (an
abbreviation of Single TARgeT).  In this section, we give a taste of
\Apqe by a simple example. Figure~\ref{fig:exmp1} describes how \Apqe
operates on the problem shown in lines 1-6. (Figure~\ref{fig:exmp1}
and Figures \ref{fig:exmp2},\ref{fig:exmp3},\ref{fig:exmp5} of the
appendix are built using a version of \Apqe generating execution
traces.  A Linux binary of this version can be downloaded
from~\cite{start_binary}.)

%
%
\setlength{\intextsep}{4pt}
\setlength{\textfloatsep}{10pt}
\begin{wrapfigure}{l}{1.4in}
\small
\begin{tabbing}
aaaa\=bb\=cc\=dd\= \kill

\scriptsize{1}\>  Find $F^*_1(Y)$ such that \\
\scriptsize{2}\>  \prob{X}{F_1\!\wedge\!F_2} $\equiv\!F^*_1\wedge$\prob{X}{F_2}  \\
\scriptsize{3}\> $Y=\s{y_1}$, $X=\s{x_2,x_3}$\\
\scriptsize{4}\> $F_1 = \s{C_1}$, $C_1=\overline{x}_2 \vee x_3$\\
\scriptsize{5}\> $F_2\!=\!\s{C_2,C_3}$, $C_2\!=\!y_1\!\vee\!x_2$,\\
\scriptsize{6}\Tt $C_3=y_1 \vee \overline{x}_3$ \\[4pt]

\scriptsize{7}\>  pick. $C_1 \in F_1$ to prove red. \\[4pt]
\scriptsize{8}\>    --- call \prd ---\\
\scriptsize{9}\> decision: $y_1=0$  at level 1 \\
\scriptsize{10}\> \ti{BCP}:$(C_2\!:x_2\!=\!1)(C_3\!:x_3\!=\!0)$ \\[4pt]
\scriptsize{11}\> LEAF: conflict at  level 1\\
\scriptsize{12}\> $C_1\!=\!\overline{x}_2\!\vee\!x_3$ is falsified\\
\scriptsize{13}\> gen. particip. cert. $C_4 = y_1$ \\
\scriptsize{14}\Tt $R_1=\mi{Res}(C_1,C_2,x_2)$,\\
\scriptsize{15}\Tt $C_4=\mi{Res}(R_1,C_3,x_3)$ \\
\scriptsize{16}\> $F_1 = F_1 \cup \s{C_4}$\\ [4pt]

\scriptsize{17}\> backtracking to level 0 \\
\scriptsize{18}\> \ti{BCP}:  $(C_4:\!y_1=1)$ \\[4pt]

\scriptsize{19}\> LEAF: $C_1$ is blocked at $x_2$\\
\scriptsize{20}\> (since $C_2$ is sat. by $y_1$ = 1)\\
\scriptsize{21}\> $K_1\!=\!\overline{y}_1\!\vee\overline{x}_2$ is the init. cert.\\
\scriptsize{22}\> $K_2\!=\!\overline{x}_2$ is the final cert.  \\
\scriptsize{23}\Tt $K_2\!=\!\mi{Res}(K_1,\!C_4,\!y_1)$ \\
\scriptsize{24}\> $K_1,K_2$ are witness certs. \\
\scriptsize{25}\>  not added to $F_1\!\wedge\!F_2$\\
\scriptsize{26}\>    ---  exit \prd ---\\[4pt]
\scriptsize{27}\> $K_2$ is a global certif.\\
\scriptsize{28}\> $F_1 := F_1 \setminus \s{C_1}$ \\
\scriptsize{29}\> Sol. $F^*_1=F_1 = \s{C_4}$ \\
\end{tabbing} 
\vspace{-10pt}
\caption{\Apqe, an example of operation}
\vspace{3pt}
\label{fig:exmp1}
\end{wrapfigure}

First, \Apqe picks $C_1$, the only quantified clause of $F_1$.  We
will refer to $C_1$ as the \tb{target clause}. Then \Apqe invokes a
procedure called \prd to prove $C_1$ redundant (lines 8-26). The
algorithm of \prd is somewhat similar to that of a
SAT-solver~\cite{grasp}. \prd makes decision assignments and runs
\ti{BCP} (Boolean Constraint Propagation). Besides, \prd uses the
notion of a \tb{decision level} that consists of a decision assignment
and implied assignments derived by \ti{BCP}. (The decision level
number 0 is an exception. It has only implied assignments.) On the
other hand, there are a few important differences.  In particular,
\prd has a richer set of backtracking conditions, a conflict being
just one of them.

\prd starts the decision level number 1 by making assignment $y_1 =
0$.  Then it runs \ti{BCP} to derive assignments $x_2=1$ and $x_3=0$
from clauses $C_2$ and $C_3$ that became \tb{unit} (i.e. have only one
unassigned variable). At this point, a conflict occurs since $C_1$ is
falsified (lines 11-16). Then \prd generates conflict clause $C_4 =
y_1$.  It is built like a regular conflict clause~\cite{grasp}.
Namely, $C_4$ is obtained by resolving $C_1$ with $C_2$ and $C_3$ to
eliminate the variables whose values were derived by \ti{BCP} at
decision level 1.  The clause $C_4$ \ti{certifies} that $C_1$ is
redundant in \prob{X}{F_1 \wedge F_2} in subspace $y_1=0$. We call a
clause like $C_4$ \tb{a certificate}.  Note that $C_1$ becomes
redundant only after \ti{adding} $C_4$ to the formula, because $C_1$
itself is involved in the derivation of $C_4$.  We will refer to the
certificates one has to add to the formula as \tb{participant
  certificates}. The participant certificates depending only on free
variables form a solution to the PQE problem.

After generating $C_4$, like a SAT-solver, \prd backtracks to the
smallest decision level where $C_4$ is unit (i.e. level 0) and derives
the assignment $y_1\!=\!1$. Then the target $C_1$ is blocked at
variable $x_2$ (lines 19-25). The reason is that $C_2$, the only
clause resolvable with $C_1$ on $x_2$, is satisfied by $y_1\!=\!1$.
At this point, \prd generates the clause
$K_1\!=\!\overline{y}_1\!\vee\overline{x}_2$.  It implies $C_1$ in
subspace $y_1 = 1$, thus certifying its redundancy there. (The
construction of $K_1$ is explained in Example~\ref{exmp:blk_cls} of
Subsection~\ref{ssec:blk_trg}.  Importantly, the target $C_1$ \ti{is
  not used} in generation of $K_1$.)  By resolving $K_1$ and $C_4 =
y_1$, \prd builds the final certificate $K_2 = \overline{x}_2$ for the
decision level 0.  \prd derives $K_2$ from $K_1$ like a SAT-solver
derives a conflict clause from a clause falsified at a conflict
level. That is $K_2$ is built by resolving out variables of $K_1$
assigned by values derived at the current decision level. In our case,
it is the variable $y_1$. Since $K_1$ and $K_2$ are derived without
using the target clause $C_1$, one \ti{does not have to} add them to
the formula. They just ``witness'' the redundancy of $C_1$.  We will
refer to them as \tb{witness certificates}.

$K_2$ implies $C_1$ in the entire space and thus is a global
certificate. So, \Apqe removes $C_1$ from $F_1$ (line 28). Since now
$F_1$ does not have quantified clauses, \Apqe terminates. It returns
the current $F_1= \s{C_4}$ as a solution $F^*_1(Y)$ to the PQE
problem.  That is \prob{X}{C_1 \wedge F_2} $\equiv C_4\,\wedge$
\prob{X}{F_2}.

\section{Description Of \Apqe}
\label{sec:start}

In this section, we describe \Apqe in more detail.  A proof of
correctness of \Apqe is given in Appendix~\ref{app:sound_compl}.  For
the sake of simplicity, in the current version of \Apqe, the witness
certificates are not added to the formula and so are not
reused\footnote{In practice, witness certificates are derived in
  subspaces where the formula is satisfiable. So, reusing them should
  boost the pruning power of \Apqe in those subspaces.}.

%
%
\subsection{The main loop of \Apqe}
%
%
%
\setlength{\intextsep}{4pt}
\begin{wrapfigure}{l}{1.8in}
\centering
\small
\parbox{0cm}{\begin{tabbing}
a\=b\=cc\= ddddddd\= \kill
$\Apqe(F_1,F_2,Y)$\{\\
\scriptsize{1}\>\,while $(\mi{true})$ \{ \\
\scriptsize{2}\Tt  $\Sub{C}{trg}:= \mi{PickQntCls}(F_1)$   \\
\scriptsize{3}\Tt  if $(\Sub{C}{trg} = \mi{nil})$ \{ \\
\scriptsize{4}\ttt   $F^*_1 := F_1$ \\
\scriptsize{5}\ttt return($F^*_1$)\}\\
\scriptsize{6}\Tt  $\pnt{q} := \emptyset$ \\
\scriptsize{7}\Tt  $K\!:=\!\mi{PrvRed}(F_1\!\wedge\!F_2,\!\Sub{C}{trg},\!Y,\!\pnt{q})$\\
\scriptsize{8}\Tt  if $(\mi{EmptyCls}(K))$ return($K$) \\
\scriptsize{9}\Tt  $F_1 := F_1 \setminus$ \s{\Sub{C}{trg}}\}\} \\
\end{tabbing}}
\vspace{-15pt}
\caption{The main loop}
\label{fig:mloop}
\end{wrapfigure}

The main loop of \Apqe is shown in Fig.~\ref{fig:mloop}. \Apqe accepts
formulas $F_1(X,Y)$, $F_2(X,Y)$ and set $Y$ and outputs formula
$F^*_1(Y)$ such that $\prob{X}{F_1 \wedge F_2} \equiv F^*_1 \wedge
\prob{X}{F_2}$. The loop begins with picking a quantified clause
$\Sub{C}{trg} \in F_1$ that is the \ti{target} clause to be proved
redundant (line 2). If $F_1$ has no quantified clauses, it is the
solution $F^*_1(Y)$ returned by \Apqe (lines 3-5). Otherwise, \Apqe
initializes the assignment \pnt{q} to $X \cup Y$ and invokes a
procedure called \prd to prove \Sub{C}{trg} redundant (lines 6-7).
\prd returns a clause $K$ implying \Sub{C}{trg} and thus
\ti{certifying} its redundancy. If $K$ is an empty clause (i.e. has no
literals), $F$ is unsatisfiable. Then \prd returns $K$ as a solution
to the PQE problem (line~8).  Otherwise, $K$ consists of (some)
literals of \Sub{C}{trg}. Besides, $K$ is redundant in \prob{X}{K
  \wedge (F_1 \cup F_2 \setminus \s{\Sub{C}{trg}})}.  So, \Sub{C}{trg}
is redundant in \prob{X}{\Sub{C}{trg} \wedge (F_1 \cup F_2 \setminus
  \s{\Sub{C}{trg}})} and \Apqe removes it from $F_1$ (line 9).
In the process of deriving the certificate $K$ above, \prd may add
participant certificates to $F_1$.  If an added certificate clause
$K'$ is quantified, \prd will be called at a later iteration of the
main loop to prove $K'$ redundant.

%
%
\subsection{Description of \prd}
\label{ssec:prv_red}

The pseudo-code of \prd is shown in Fig~\ref{fig:prv_red}. Let $F$
denote $F_1\,\wedge\,F_2$. The objective of \prd is to prove the current
target clause \Sub{C}{trg} redundant in \prob{X}{F} in the subspace
specified by an assignment \pnt{q}\, to $X \cup Y$.  The reason why
one needs \pnt{q}\, is that \prd can be called recursively in
\ti{subspaces} to prove redundancy of some ``local'' target clauses
(Section~\ref{sec:trg_unit}).

First, in line 1, \prd stores the initial value of \pnt{q}. (It is
used in line 10 to limit the backtracking of \prd.) Besides, \prd
initializes the assignment queue $Q$.  The main work is done in a loop
similar to that of a SAT-solver~\cite{grasp}.  The operation of \prd
in this loop is partitioned into two parts separated by the dotted
line.

%
%
\setlength{\intextsep}{4pt}
\begin{wrapfigure}{l}{2in}
\centering
\small
\parbox{0cm}{\begin{tabbing}
a\=bb\= cc\= dd\= \kill
// $F$ denotes $F_1 \wedge F_2$ \\
// \\
$\mi{PrvRed}(F,\Sub{C}{trg},\!Y,\!\pnt{q})$\{\\
\scriptsize{1}\>\, $\Sub{\pnt{q}}{init} := \pnt{q}$; $Q = \emptyset$ \\
\scriptsize{2}\>\, while ($\mi{true}$) \{\\
\scriptsize{3}\Tt if ($Q = \emptyset$) \{ \\
\scriptsize{4}\ttt  $(v,b)\!:=\!\mi{MakeDec}(F,\!Y,\!\Sub{C}{trg})$\\
\scriptsize{5}\ttt  $\mi{UpdQueue}(Q,v,b)$ \} \\
\scriptsize{6}\Tt $\Sub{K}{bct}\!:=\!\mi{BCP}(Q,\!\pnt{q},F,\!Y,\!\Sub{C}{trg})$\\
\scriptsize{7}\Tt if $(\Sub{K}{bct} = \mi{nil})$ continue \\
$~~~-----$\\
\scriptsize{8}\Tt $K\!:=\!\mi{Lrn}(F,\pnt{q},\!\Sub{K}{bct})$\\
\scriptsize{9}\Tt if $(\mi{Particip}(K))$ $F_1\!:=\!F_1\!\cup\!\s{K}$  \\
\scriptsize{10}\Tt  $\mi{Backtrack}(\Sub{\pnt{q}}{init},\pnt{q},K)$ \\
\scriptsize{11}\Tt if $(\pnt{q} = \Sub{\pnt{q}}{init})$ return($K$) \\
\scriptsize{12}\Tt $\mi{UpdQueue}(Q,\pnt{q},K)$\}\} \\
\end{tabbing}}
\vspace{-10pt}
\caption{The \prd procedure}
\label{fig:prv_red}
\end{wrapfigure}

The first part (lines 3-7) starts with checking if the assignment
queue $Q$ is empty. If so, a decision assignment $v\!=\!b$ is picked
and added to $Q$ (lines 4-5). Here $v\!\in\!(X \cup Y)$ and $b\!\in\!
\s{0,1}$. The variables of $Y$ are the first to be assigned by
\prd\footnote{\label{ftn:dec_order}
The goal of \Apqe is to derive free clauses making the quantified
clauses of $F_1$ redundant in \prob{X}{F_1 \wedge F_2}. Assigning
variables of $X$ after those of $Y$ guarantees that, when generating a
new clause, the variables of $X$ are resolved out \ti{before} those of
$Y$.

}. So \mbox{$v\!\in\!X$}, only if all
variables of $Y$ are assigned. If $v \in \V{\Sub{C}{trg}}$, then $v=b$
is picked so as to \ti{falsify} the corresponding literal of
\Sub{C}{trg}.  (\Sub{C}{trg} is obviously redundant in subspaces where
it is satisfied.)

Then \prd calls the
\ti{BCP} procedure.  If \ti{BCP} identifies a backtracking condition,
it returns a certificate clause \Sub{K}{bct} implying \Sub{C}{trg} in
the current subspace.  (Here, ``\ti{bct}'' stands for ``backtracking''
because \Sub{K}{bct} is the reason for backtracking.)  After \ti{BCP},
\prd goes to the second part of the loop where the actual backtracking
is done. If no backtracking condition is met, a new iteration begins.

The certificate \Sub{K}{bct} returned by \ti{BCP} depends on the
backtracking condition. \ti{BCP} identifies three of them: a) a
conflict, b) \Sub{C}{trg} is implied in subspace \pnt{q}\,\,by an
existing clause, and c) \Sub{C}{trg} is blocked in subspace
\pnt{q}. In the first case, \Sub{K}{bct} is a clause falsified in the
current subspace \pnt{q} i.e. one reached during BCP. In the second
case, \Sub{K}{bct} is a clause that \ti{BCP} made unit and that shares
its only literal with \Sub{C}{trg}. (Such a clause implies
\Sub{C}{trg} in the current subspace \pnt{q}.)  In the third case,
\Sub{K}{bct} is generated by \prd as described in the next subsection.

\prd starts the second part (lines 8-12) with a procedure called \lrn
that uses \Sub{K}{bct} to build another certificate $K$ implying
\Sub{C}{trg} in subspace \pnt{q}. Generation of $K$ from \Sub{K}{bct}
is similar to how a SAT-solver generates a conflict clause from a
falsified clause~\cite{grasp}.  Namely, when building $K$, \lrn
resolves out the variables whose value was \ti{derived} at the
decision level where the backtracking condition occurred.
If \Sub{C}{trg} was used to generate $K$ i.e. the latter is a
\ti{participant certificate}, $K$ is added to $F_1$ (line 9).  This
guarantees that \prd adds only clauses \ti{implied} by the current
formula.  (The only es-clauses generated by \prd and described in the
next subsection are used solely to generate \ti{witness certificates}.
So, a witness certificate $K$ is, in general, es-implied rather than
implied by the formula $F$ in \prob{X}{K \wedge F}. For that reason,
in the current version of \Apqe, witness certificates are not added to
the formula. In one special case, to avoid adding a witness
certificate, \prd has to derive and add to the formula a special
clause. This case is described in Appendix~\ref{app:spec_case}.)

After generating $K$, \prd backtracks (line 10). The assignment
\Sub{\pnt{q}}{init} sets the limit of backtracking.  If \prd reaches
this limit, \Sub{C}{trg} is proved redundant in the required subspace
and \prd terminates (line 11). Otherwise, an assignment is derived
from $K$ and added to the queue $Q$ (line 12). This is similar to the
backtracking of a SAT-solver to the smallest decision level where the
last conflict clause is unit.  So, an assignment can be derived from
this clause by \ti{BCP}. More information can be found in
Appendix~\ref{app:bcktr}.

%
%
\subsection{Generation of clause \Sub{K}{bct} when \Sub{C}{trg} is blocked}

\label{ssec:blk_trg}
Let \Sub{C}{trg} get blocked in \prob{X}{F} in the current subspace
\pnt{q} during \ti{BCP}.  So, \Sub{C}{trg} is redundant in \prob{X}{F}
in this subspace. Then a clause \Sub{K}{bct} is generated as described
in Proposition~\ref{prop:blk_to_es} where
\mbox{$\cof{(\Sub{K}{bct})}{q} \imp \cof{(\Sub{C}{trg})}{q}$}\, and
\Sub{K}{bct} is redundant in \prob{X}{\Sub{K}{bct} \wedge (F \setminus
  \s{\Sub{C}{trg}})}. Thus, \Sub{K}{bct} certifies redundancy of
\Sub{C}{trg} in subspace \pnt{q} and is returned by \ti{BCP} as the
reason for backtracking (line 6 of Fig~\ref{fig:prv_red}). This is the
only case of backtracking where the clause \Sub{K}{bct} returned by
BCP is \ti{es-implied} rather than implied by $F$ in
\prob{X}{\Sub{K}{bct} \wedge F}.

%
%
\begin{proposition}
  \label{prop:blk_to_es}
Given a formula \prob{X}{F(X,Y)}, let $\Sub{C}{trg} \in F$.  Let
\pnt{q} be an assignment to $X \cup Y$ that does not satisfy
\Sub{C}{trg}. Let \Sub{C}{trg} be blocked in \prob{X}{F} at $w \in X$
in subspace \pnt{q} where \mbox{$w \not\in \Va{q}$}. Let $l$($w$) be
the literal of $w$ present in \Sub{C}{trg}.  Let $K'$ denote the
longest clause falsified by \pnt{q}. Let $K''$ be a clause formed from
$l(w)$ and a subset of literals of \Sub{C}{trg} such that every clause
of \cof{F}{q} unresolvable with \cof{(\Sub{C}{trg})}{q} on $w$ is
unresolvable with \cof{(K'')}{q} too. Let $\Sub{K}{bct} = K' \vee
K''$. Then \mbox{$\cof{(\Sub{K}{bct})}{q} \imp
  \cof{(\Sub{C}{trg})}{q}$} and \Sub{K}{bct} is redundant in
\prob{X}{\Sub{K}{bct} \wedge (F \setminus \s{\Sub{C}{trg}})}.
\end{proposition}

\begin{example}
\label{exmp:blk_cls}
Let us recall the example of Section~\ref{sec:exmp1}.  Here we have a
formula \prob{X}{F} where $X=\s{x_2,x_3}$, $Y=\s{y_1}$, $F = C_1
\wedge C_2 \wedge C_3 \wedge C_4$, $C_1 = \overline{x}_2 \vee x_3$,
$C_2 = y_1 \vee x_2$, $C_3 = y_1 \vee \overline{x}_3$, $C_4 = y_1$. In
subspace $y_1=1$, the target clause $C_1$ is \ti{blocked} at $x_2$ and
hence is redundant. ($C_1$ can be resolved on $x_2$ only with $C_2$
that is satisfied by $y_1=1$.) This redundancy can be certified by the
clause $K_1 = \overline{y_1} \vee \overline{x}_2$ implying $C_1$ in
subspace $y_1=1$. The clause $K_1$ is constructed as $K' \vee K''$ of
Proposition~\ref{prop:blk_to_es}.  Here $K' = \overline{y}_1$ is the
clause falsified by the assignment $y_1=1$. The clause
$K''=\overline{x}_2$ has the same literal of the blocked variable
$x_2$ as the target clause $C_1$.  (Formula $F$ has no clauses
unresolvable with $C_1$ on $x_2$. So, $K''$ needs no more literals.)
$\blacksquare$
\end{example}
%
%
\begin{remark}
\label{rem:blk_to_es}
Let \Sub{C}{trg} of Proposition~\ref{prop:blk_to_es} be \ti{unit} in
subspace \pnt{q} (and $w$ be the only unassigned variable of
\Sub{C}{trg}). Then $K''$ reduces to $l(w)$ and $\Sub{K}{bct} = K'
\vee \l(w)$.
\end{remark}

\section{The Case When The Target Clause Becomes Unit}
\label{sec:trg_unit}
In this section, we describe what \prd does when the current target
clause \Sub{C}{trg} becomes unit. (Since \prd first assigns variables
of $Y$, the unassigned variable of \Sub{C}{trg} is in $X$
i.e. \ti{quantified}.)  In this case, \prd recursively calls itself to
prove redundancy of every clause resolvable with \Sub{C}{trg}. A
concrete example is given in Appendix~\ref{app:exmp}.

Figure~\ref{fig:bcp} shows the fragment of \ti{BCP} invoked when the
current target \Sub{C}{trg} becomes unit. Let $x \in X$ denote the
only unassigned variable of \Sub{C}{trg}. Assume for the sake of
clarity that \Sub{C}{trg} contains the positive literal of $x$. At
this point a SAT-solver would derive the assignment $x=1$ because
\Sub{C}{trg} is falsified under assignment $x=0$.  However, the goal
of \prd is to prove \Sub{C}{trg} redundant rather than find a
satisfying assignment. The fact that \Sub{C}{trg} is falsified in a
subspace says nothing about whether it is redundant there.

%
%
\setlength{\intextsep}{4pt}
\setlength{\textfloatsep}{10pt}
\begin{wrapfigure}{l}{2in}
\centering
\small
\parbox{0cm}{\begin{tabbing}
aaa\=bb\=cc\= dd\= \kill
//  $F$ denotes $F_1 \wedge F_2$ \\
// \\
$\mi{BCP}(Q,\pnt{q},F,Y,\Sub{C}{trg})$ \{ \\
~~~~~~~~~~~$\cdot\cdot\cdot$  \\
\scriptsize{10}\> if $(\mi{Unit}(\Sub{C}{trg},\pnt{q}))$ \{  \\
\scriptsize{11}\Tt   $(\Sub{K}{bct},G) := \mi{Rcrs}(F,\Sub{C}{trg},\pnt{q})$\\
\scriptsize{12}\Tt   if ($\Sub{K}{bct} \neq \mi{nil}$) return(\Sub{K}{bct})\\
\scriptsize{13}\Tt   $\Sub{K}{bct} := \mi{GenCert}(F,\Sub{C}{trg},\pnt{q},G)$\\
\scriptsize{14}\Tt   return(\Sub{K}{bct}) \} \\
~~~~~~~~~~~$\cdot\cdot\cdot$  \\
\end{tabbing}}
\vspace{-15pt}
\caption{A fragment of \ti{BCP}}
\label{fig:bcp}
\end{wrapfigure}

So, \ti{BCP} invokes procedure \ti{Rcrs} that recursively calls \prd
for every clause resolvable with \Sub{C}{trg} on $x$. The name
\ti{Rcrs} abbreviates ``recurse''.  This call can have two outcomes.
First, \ti{Rcrs} may return a clause \Sub{K}{bct} that is falsified by
\pnt{q}. (This is possible only if $F$ is unsatisfiable in subspace
\pnt{q}.)  Then \ti{BCP} returns \Sub{K}{bct} as the reason for
backtracking (line 12). Second, \ti{Rcrs} proves the clauses
resolvable with \Sub{C}{trg} on $x$ redundant and returns a set $G$ of
certificates. For each clause $C$ resolvable with \Sub{C}{trg} on $x$,
the set $G$ contains a certificate of redundancy of $C$ in subspace
$\pnt{q}\, \cup \s{x=1}$.  At this point, \Sub{C}{trg} is blocked at
$x$ in subspace \pnt{q}. So, a certificate \Sub{K}{bct} is built using
Proposition~\ref{prop:blk_to_es} (line 13). It is returned by \ti{BCP}
as the reason for backtracking.
%
%
\begin{remark}
  \label{rem:loc_rem}
Every clause $C$ resolvable with \Sub{C}{trg} on $x$ and proved
redundant in subspace $\pnt{q}\, \cup \s{x=1}$ is \ti{temporarily}
removed from the formula $F$ until backtracking. Since $C$ is proved
redundant only locally, one has to \ti{return} it to $F$ after
backtracking.  Nevertheless, \Sub{C}{trg} remains blocked in subspace
\pnt{q} and hence redundant there (see Remark~\ref{rem:blk_cls} of
Section~\ref{sec:es_impl}).
\end{remark}


\section{Invariant Generation For Bug Detection}
\label{sec:inv_gen}
In this section, we discuss using PQE for bug detection by invariant
generation.  An \tb{invariant} $P$ of a sequential circuit $N$ is a
formula satisfied by every reachable state of $N$.  So, the states
falsifying $P$ are unreachable in $N$. We will call an invariant
\tb{local} if it holds in \ti{some} time frames. To distinguish
between local invariants and those holding in every time frame we will
call the latter \tb{global}. When we say ``invariant'' without a
qualifier we mean a global invariant.
%
%
\subsection{Two kinds of bugs}
\label{ssec:two_kinds}
Let \bm{N} be a sequential circuit. Let $P_1(S)$,$\dots$,$P_n(S)$ be
invariants that must hold for $N$ where $S$ is the set of state
variables. That is, these are \ti{desired} invariants of $N$. One can
view the aggregate invariant $P_1\wedge \dots \wedge P_n$ as a
\ti{specification} \bm{\spe} for $N$. We will say that \pnt{s} is a
\tb{bad state} (respectively a \tb{good state}) if
$\mi{Sp}(\pnt{s})=0$ (respectively $\mi{Sp}(\pnt{s})=1$). As far as
reachable states are concerned, $N$ can have \tb{two kinds of bugs}. A
\ti{bug of the first kind} occurs when a bad state is reachable in
$N$. A \ti{bug of the second kind} takes place when a good state that
is supposed to be reachable is unreachable in $N$. Informally, a bug
of the first kind (respectively the second kind) indicates that the
set of reachable states is ``larger'' (respectively ``smaller'') than
it should be.

To prove that $N$ has no bugs of the first kind, it suffices to show
that the aggregate invariant \spe holds for $N$.  Note that this does
nothing to identify bugs of the \ti{second} kind. Indeed, let
\Sub{N}{triv} be a circuit looping in an initial state \ppnt{s}{init}
satisfying \spe. Then \spe holds for \Sub{N}{triv}. However,
\Sub{N}{triv} has bugs of the second kind (assuming that a correct
implementation has to reach states other than \ppnt{s}{init}). A
straightforward way to identify bugs of the second kind is to compute
the set of all unreachable states of $N$. If this set contains a state
that is supposed to be reachable, $N$ has a bug of the second
kind. Unfortunately, computing such a set can be prohibitively hard.

Note that one cannot \ti{prove} the existence of a bug of the second
kind by testing: the unreachability of a state cannot be established
by a counterexample. However, testing can point to the possibility of
such a bug (see Section~\ref{sec:be_test}).  An important method for
finding bugs of the second kind in a circuit $N$ is to identify its
unwanted invariants.  We will call $Q$ an \tb{unwanted invariant} if
it is falsified by a state \pnt{s} that is supposed to be
reachable. If $Q$ holds for $N$, then \pnt{s} is unreachable and $N$
has a bug of the second kind. Currently, unwanted invariants are
detected via checking a list of expected events~\cite{syst_veril}. (If
an event of this list never occurs, $N$ has an unwanted invariant.)
This list is formed \ti{manually}. So, in a sense, unwanted invariants
are simply guessed. For instance, one can check if $N$ reaches a state
where a state variable $s_i \in S$ changes its initial value. If not,
then $N$ has an unwanted invariant, assuming that states with both
values of $s_i$ are supposed to be reachable in $N$. (For the circuit
\Sub{N}{triv} above, this unwanted invariant holds for every state
variable.) The problem with guessing unwanted invariants is that, in
general, they are as unpredictable as bugs.

In this paper, we consider an approach to finding bugs of the second
kind where \tb{invariants are generated} automatically in a systematic
way.  The necessary condition for an invariant $Q$ to be unwanted is
$\spe \not\imp Q$. (If $\spe \imp Q$, then $Q$ is a desired invariant
of $N$.)  So, the overall idea is to generate invariants of $N$ not
implied by \spe and \ti{check if any of them is unwanted}. In some
cases, the designer can tell if $Q$ is an unwanted
invariant. Otherwise, one needs to find a bug-exposing test as
explained in Section~\ref{sec:be_test}. In general, an invariant
specifies only a subset of unreachable states of $N$.  So, it can be
generated much more efficiently than the entire set of unreachable
states.

%
%
\vspace{-7pt}
\subsection{Invariant generation by PQE}
\label{ssec:pg_by_pqe}
Let us show how one can generate invariants by PQE. First, we consider
the generation of a local invariant that holds in $k$-th time frame.
So, a state falsifying such an invariant is unreachable in $k$
transitions. Then we show that a local invariant can be used to
generate global invariants.  Let formulas $I$ and $T$ specify the
initial states and the transition relation of $N$ respectively. Let
\bm{F_k} denote the formula obtained by unfolding $N$ for $k$ time
frames. That is $F_k = I(S_0) \wedge T(S_0,S_1) \wedge \dots \wedge
T(S_{k-1},S_k)$ where $S_j$ denotes the state variables of $j$-th time
frame, \mbox{$0\!\leq\!j\!\leq k$}. (For the sake of simplicity, in
$T$, we omit the combinational i.e. unlatched variables of $N$.)

Let $H_k(S_k)$ be a solution to the PQE problem of taking a clause $C$
out of \prob{\Abs{k-1}}{F_k} where $\Abs{k-1} = S_0 \cup \dots \cup
S_{k-1}$.  That is \prob{\Abs{k-1}}{F_k} $\equiv H_k \wedge$
\prob{\Abs{k-1}}{F_k \setminus \s{C}}. Since $F_k$ implies $H_k$, the
latter is a \tb{local invariant} of $N$ holding in $k$-th time frame.
Note that performing full QE on \prob{\Abs{k-1}}{F_k} produces the
\ti{strongest} local invariant specifying all states unreachable in
$k$ transitions.  Computing this invariant can be prohibitively hard.
PQE allows to build a collection of \ti{weaker} local invariants $H_k$
each specifying only a subset of states unreachable in $k$
transitions.  Computation of such invariants can be dramatically more
efficient since PQE can be much easier than QE.

One can use $H_k$ to find global invariants as follows.  The fact that
$H_k$ is not a global invariant does not mean that \ti{every clause}
of $H_k$ is not a global invariant either.  On the contrary, the
experiments presented in Appendix~\ref{app:exper2} showed that even
for small $k$, a large share of clauses of $H_k$ were a global
invariant. (To find out if a clause $Q \in H_k$ is a global invariant,
one can simply run a model checker to see if $Q$ holds.) 

\subsection{Using Invariant Generation}
\label{ssec:use_pg}
One of possible ways to use invariant generation is to take out
clauses according to some coverage metric.  The intuition here is
based on the two observations below.  Let $Q$ be an invariant obtained
by taking a clause $C$ out of \prob{\Abs{k-1}}{F_k}. The first
observation is that the states falsifying $Q$ are unreachable due to
the \ti{presence} of $C$.  So, if a part of the circuit $N$ is
responsible for a bug of the second kind and $C$ is related to this
part, taking out $C$ may produce an \ti{unwanted} invariant. This
observation is substantiated in the next section. The second
observation is that by taking out different clauses one generates
different invariants ``covering'' different parts of the circuit
$N$. An example of a coverage metric is presented in the next section.
There we take out the clauses containing an unquantified variable of
\prob{\Abs{k-1}}{F_k} (i.e. a state variable of the $k$-th time
frame).  One can view such a choice of clauses as a way to cover the
design in terms of \ti{latches}.

\section{An Experiment With FIFO Buffers}
\label{sec:exper1}
In this section, we describe an experiment with FIFO buffers. Our
objective here is twofold. First, we explain how bug detection by
invariant generation works on a practical example. Second, we want to
show that even the current version of \Apqe whose performance can be
dramatically improved can address an important practical problem.  In
Appendix~\ref{app:exper2}, we apply \Apqe to invariant generation for
HWMCC-13 benchmarks. We also use these benchmarks to compare PQE with
QE and \Apqe with \apqe, our previous PQE-solver~\cite{hvc-14}. In
this section, such a comparison is done on FIFO buffers. A Linux
binary of \Apqe and a sample of formulas used in the experiments can
be downloaded from~\cite{start_binary}. In all experiments, we used a
computer with Intel Core i5-8265U CPU of 1.6\,GHz.

%
%
\vspace{-5pt}
\subsection{Buffer description}
%
%
\begin{wrapfigure}{l}{1.9in}
\centering
\small
\vspace{-10pt}
\parbox{0cm}{\begin{tabbing}
aa\=bb\=cc\= dd\= \kill
~~~~~~~~~~~$\cdot\cdot\cdot$  \\
if ($\mi{write}==1~\&\&~\mi{currSize}< n$) \\
*\>if ($\mi{dataIn}~!\!= \mi{Val}$) \\
\Tt begin \\
\Tt     $\data[\mi{wrPnt}]  = \mi{dataIn}$; \\
\Tt     $\mi{wrPnt}  = \mi{wrPnt}+1$; \\
\Tt   end \\
~~~~~~~~~~~$\cdot\cdot\cdot$  \\
\end{tabbing}}
\vspace{-25pt}
\caption{A buggy fragment of Verilog code describing \fifo}
\vspace{5pt}
\label{fig:bug}
\end{wrapfigure}

\vspace{-5pt}

In this section, we give an example of bug detection by invariant
generation for a FIFO buffer called \fifo.  Let $n$ be the number of
elements of \fifo and \data denote the data buffer of \fifo.  Let each
$\data[i],i=1,\dots,n$ have $p$ bits and be an integer where $0 \leq
\data[i] < 2^p$.  A fragment of the Verilog code describing \fifo is
shown in Fig~\ref{fig:bug}. This fragment has a buggy line marked with
an asterisk. In the correct version without the marked line, a new
element $\mi{dataIn}$ is added to \data if the \ti{write} flag is on
and \fifo holds less than $n$ elements.  Since \data can have any
combination of numbers, all \data states are supposed to be reachable.

However, due to the bug, the number $\mi{Val}$ cannot appear in \data.
(Here $\mi{Val}$ is some constant \mbox{$0\!<\!\mi{Val}\!<\!
  2^p$}. We assume that the buffer elements are initialized to
0.) So, \fifo has a \ti{bug of the second kind} since it cannot reach
states where an element of \data equals $\mi{Val}$.  This bug is hard
to detect by random testing because it is exposed only if one tries to
add \ti{Val} to \fifo.  Similarly, it is virtually impossible to guess
an unwanted invariant of \fifo exposing this bug unless one knows
exactly what this bug is.

%
%
\vspace{-5pt}
\subsection{Bug detection by invariant generation}
Let $N$ be a circuit implementing \fifo. Let $S$ be the set of state
variables of $N$ and \bm{\Sub{S}{data} \subset S} be the subset
corresponding to the data buffer \data.  We used \Apqe to generate
invariants of $N$ as described in the previous section.  Note that an
invariant $Q$ depending only on \Sub{S}{data} is an unwanted one.  If
$Q$ holds for $N$, some states of \data are unreachable. Then \fifo
has a \ti{bug of the second kind} since every state of \data is
supposed to be reachable. To generate invariants, we used the formula
$F_k = I(S_0) \wedge T(S_0,S_1) \wedge \dots \wedge T(S_{k-1},S_k)$
introduced in Subsection~\ref{ssec:pg_by_pqe}. Here $I$ and $T$
describe the initial states and the transition relation of $N$
respectively and $S_j$ is the set of state variables in $j$-th time
frame. First, we used \Apqe to generate local invariants
$H_k$. Namely, $H_k$ was obtained by taking a clause $C$ out of
\prob{\Abs{k-1}}{F_k} where $\Abs{k-1} = S_0 \cup \dots \cup S_{k-1}$.
That is, \prob{\Abs{k-1}}{F_k} $\equiv H_k \wedge$
\prob{\Abs{k-1}}{F_k \setminus \s{C}}. We picked clauses to take out
as described in Subsection~\ref{ssec:use_pg}.  Namely, we took out
only clauses containing an unquantified variable (i.e. a state
variable of the $k$-th time frame). The time limit for solving the PQE
problem of taking out a clause was set to 10 sec.

For each clause $Q$ of every local invariant $H_k$ generated by PQE,
we checked if $Q$ was a global invariant. Namely, we used a publicly
available version of \ict~\cite{ic3,ic3_impl} to verify if the
invariant $Q$ held. If so, and $Q$ depended only on variables of
\Sub{S}{data}, $N$ had an \ti{unwanted invariant}. Then we stopped
invariant generation. The results of the experiment are given in
Table~\ref{tbl:buff}.  (In the experiment, we considered buffers with
32-bit elements.) Let us use the first line of Table~\ref{tbl:buff} to
explain its structure. The first two columns show the size of \fifo
implemented by $N$ and the number of latches in $N$ (8 and 300).  The
third column gives the number $k$ of time frames (i.e. 5).  The value
13 shown in the fourth column is the number of clauses taken out of
\prob{\Abs{k-1}}{F_k} before an unwanted invariant was generated.
That is, 13 was the number of PQE problems for \Apqe to solve.

\begin{wraptable}{l}{2.6in}
\centering
\vspace{2pt}
\scriptsize
\caption{\small{FIFO buffer with $n$ elements of 32 bits. Time
    limit is 10 sec. per PQE problem}}
\vspace{-5pt}
\begin{tabular}{|p{18pt}|p{18pt}|p{20pt}|p{20pt}|p{20pt}|p{12pt}|p{12pt}|p{16pt}|} \hline
  buff.     & lat-  &  time   &clau-   &\multicolumn{3}{c|}{local single}& tot. \\ 
  size      & ches  &  fra-   & ses    &\multicolumn{3}{c|}{clause invariants} & run \\ \cline{5-7}
  ($n$)     &          &  mes    & taken  & gen.    & \multicolumn{2}{c|}{global?} & time  \\ \cline{6-7}
  &       &         & out    & invar.   & no  &  yes   & (s.)  \\ \hline
  ~~8    & ~300      &~~ 5  & ~13      &  ~~10      & ~8    & ~2   & 25      \\ \hline
  ~~8    & ~300      &~~ 10  & ~11     &  ~~4       & ~1    & ~3    & 54       \\ \hline
  ~~16    & ~560      &~~ 5  & ~26       &  ~~18       & ~16   & ~2   & 43      \\ \hline
  ~~16   & ~560      &~~ 10  & ~17      &  ~~2        & ~0    & ~2   & 78     \\ \hline  
\end{tabular}                
\vspace{5pt}
\label{tbl:buff}
\end{wraptable}

Let $C$ be a clause taken out of the scope of quantifiers by
\Apqe. Every free clause $Q$ generated when taking out $C$ was stored
as a local \ti{single-clause invariant}.  The fifth column shows that
when solving the 13 PQE problems above, \Apqe generated 10 free
clauses forming 10 local single-clause invariants. These invariants
held in $k$-th time frame (where \mbox{$k\!=\!5$}). The next two
columns show how many invariants out of 10 \ict proved false or true
\ti{globally} (8 and 2). The last column gives the total run time (25
sec).

For all four instances of \fifo listed in Table~\ref{tbl:buff}, the
invariants generated by \Apqe had one asserting that \fifo cannot
reach a state where an element of \data equals $\mi{Val}$. This
invariant was produced when taking out a clause of $F_k$ related to
the buggy line of Fig.~\ref{fig:bug} (confirming the intuition of
Subsection~\ref{ssec:use_pg}.) When picking a clause to take out,
i.e. a clause containing a state variable of $k$-th time frame, one
could make a good choice by pure luck.  To address this issue, we
picked clauses to take out \ti{randomly} and performed 10 different
runs of invariant generation. For each line of Table~\ref{tbl:buff},
the columns four to eight actually describe the average value of 10
runs.

\subsection{Comparing PQE and QE}
To contrast PQE and QE, we used a
  high-quality tool \cad~\cite{cadet_qe,cadet_imp} to perform QE on
  formulas \prob{\Abs{k-1}}{F_k}.  That is, instead of taking a clause
  out of \prob{\Abs{k-1}}{F_k} by PQE, we applied \cad to perform full
  QE on this formula. As mentioned in Subsection~\ref{ssec:pg_by_pqe},
  performing QE on \prob{\Abs{k-1}}{F_k} produces the strongest local
  invariant specifying all states unreachable in $k$ transitions.
\cad failed to finish QE on \prob{\Abs{k-1}}{F_k} with the time limit
of 600 sec. On the other hand, \Apqe finished 63\% of the PQE problems
of taking a clause out of \prob{\Abs{k-1}}{F_k} in the time limit
(i.e. under 10 sec). This shows that PQE can be dramatically more
efficient than QE if only a small part of the formula gets
unquantified.

\subsection{\Apqe versus \apqe}
\label{ssec:comparison}
%
  We repeated the
  experiment above using \apqe instead of \Apqe. \apqe is our previous
  PQE-solver~\cite{hvc-14} based on the machinery of
  D-sequents~\cite{fmcad12,fmcad13}. \apqe solved only 2\% of the PQE
  problems in the time limit of 10 sec. (as opposed to 63\% by \Apqe)
  and failed to generate an unwanted invariant.

\section{Identifying Unwanted Invariants}
\label{sec:be_test}
Sometimes it is easy to see that an invariant $Q$ is unwanted (e.g.
an invariant of \fifo depending only on variables of \Sub{S}{data} is
obviously unwanted). However, in general, to show that $Q$ is
unwanted, one needs to find a \tb{bug-exposing} test (or \tb{be-test}
for short.) Let \pnt{t} denote a test $(\ppnt{s}{0},\ppnt{x}{0},\dots,
\ppnt{x}{k-1})$ for a circuit $N$.  Here \ppnt{s}{0} is an initial
state of $N$ and \ppnt{x}{i}, \mbox{$0\!\leq i \!<\! k$} is a full
assignment to the combinational input variables of $N$ in $i$-th time
frame.  (Recall that so far, for the sake of simplicity, we omitted
combinational variables in the description of $N$.)

Let $(\ppnt{s}{0},\ppnt{s}{1},\dots, \ppnt{s}{k})$ be the trace
produced by the test \pnt{t} above (i.e. $N$ moves from
state \ppnt{s}{i} to
\ppnt{s}{i+1} under input \ppnt{x}{i}). We will say that \pnt{t} is a
\tb{be-test} for an invariant $Q$ if it is a \ti{counterexample} for
$Q$ in a \ti{correct} \smallskip version $N'$ of $N$.  That is  \pnt{t}
produces a trace $(\ppnt{s}{0},\pqnt{s'}{1},\dots, \pqnt{s'}{k})$ in
$N'$ where \pqnt{s'}{k} \ti{falsifies} $Q$. Consider, for instance,
the invariant $Q$ stating that \fifo cannot have the number $\mi{Val}$
in $j$-th element of its data buffer.  Let $\pnt{t} =
(\ppnt{s}{0},\ppnt{x}{0},\dots, \ppnt{x}{k-1})$ be a test such that
when applied to a correct design, \pnt{t} would make $\mi{Val}$ appear
in the $j$-th element of the data buffer. Then \pnt{t} is a be-test
for $Q$.

Finding a be-test is based on the following idea. Let an invariant
clause $Q$ be extracted from a formula $H_k$ obtained by taking a
clause $C$ out of \prob{\Abs{k-1}}{F_k} as described above. As we
mentioned in Remark~\ref{rem:noise}, \prob{\Abs{k-1}}{F_k} $\equiv H_k
\wedge$ \prob{\Abs{k-1}}{F_k \setminus \s{C}} holds even if the
clauses implied by $F_k \setminus \s{C}$ are removed from $H_k$. So,
we will assume that $F_k\setminus \s{C} \not\imp Q$. Then there is an
assignment \pnt{p} satisfying $(F_k \setminus \s{C})\wedge
\overline{Q}$. \smallskip One can view \pnt{p} as an execution trace
of $N$ when $C$ is removed from $F_k$.

Let
$\pnt{t^*}\!=\!(\pqnt{s^*}{0},\!\pqnt{x^*}{0},\!\dots,\!\pqnt{x^*}{k-1})$
be the test where the variables are assigned as in \pnt{p}. One can
make two claims about \pnt{t^*}. First, if $Q$ is an unwanted
invariant, \pnt{t^*} can be \ti{very close to a be-test}. Second, if
$Q$ is a \ti{desired} invariant, \pnt{t^*} is a \ti{high-quality test}
for $N$ that can be used e.g. in regression testing. The first claim
is due to \pnt{t^*} being extracted from \pnt{p} falsifying $Q$ and
satisfying all clauses of $F_k$ but $C$. The second claim is due to
\pnt{t^*} being able to detect modifications of $N$ breaking $Q$.  One
can try to produce a be-test from \pnt{t^*} either ``manually'' or
automatically generating small variations of \pnt{t^*}.

Consider, for example, the unwanted invariant $Q$ stating that the
number \ti{Val} cannot appear in $j$-th element of the data
buffer \smallskip of \fifo.  For every example of
Table~\ref{tbl:buff}, we \smallskip built the test \mbox{\pnt{t^*}=
  $(\pqnt{s^*}{0},\pqnt{x^*}{0},\dots, \pqnt{x^*}{k-1})$} extracted
from \pnt{p} satisfying $(F_k \setminus \s{C})\wedge \overline{Q}$. In
every case, \pnt{t^*} turned out to be different from a be-test only
\ti{in one bit}.

\section{Some Background}
\label{sec:bg}
In this section, we discuss some research relevant to PQE and
invariant generation.  Information on BDD and SAT based QE can be
found in~\cite{bryant_bdds1,bdds_qe}
and\linebreak\cite{blocking_clause,fabio,cofactoring,cav09,cav11,cmu,nik1,nik2,cadet_qe}
respectively.
Making clauses of a formula redundant by adding resolvents is
routinely used in pre-processing~\cite{prepr,blocked_qbf} and
in-processing \cite{inproc} phases of QBF/SAT-solving. Identification
and removal of blocked clauses is also an important part of formula
simplification~\cite{tacas_blocked_clauses}. The difference of our
approach from these techniques is twofold. First, our approach employs
redundancy based \ti{reasoning} rather than formula optimization. So,
for instance, to make a target clause redundant, \Apqe can add a lot
of new clauses making the formula \ti{larger}. Second, these
techniques try to identify \ti{non-trivial} conditions under which a
clause $C$ is redundant in the \ti{entire space}.  In our approach,
one \ti{branches} to reach a subspace where proving $C$ redundant is
\ti{trivial}.  Proving redundancy of $C$ in the entire space is
achieved by merging the results of different branches.

The predecessor of the approach based on certificate clauses is the
machinery of dependency sequents
(D-sequents)~\cite{fmcad12,fmcad13}. A D-sequent is a record stating
redundancy of a clause in a quantified formula. A potential flaw of
this machinery is that to reuse a learned D-sequent, one has to keep
some contextual information~\cite{qe_learn}, which can make D-sequent
reusing expensive.  On the other hand, the reuse of certificate
clauses does not require to store any contextual information.

To the best of our knowledge, the existing procedures generate only
particular classes of invariants. For instance, they generate
invariants relating internal points of circuits to check for
equivalence ~\cite{ec_invars} or loop
invariants~\cite{loop_invars}. Another example of special invariants
are clauses generated by \ict to make an invariant $P$
inductive~\cite{ic3}. The problem here is that the closer $P$ to an
inductive invariant, the fewer invariant clauses \ict generates to
make $P$ inductive. For instance, for the circuit \Sub{N}{triv}
mentioned in Subsection~\ref{ssec:two_kinds} that loops in an initial
state, every true desired invariant $P_i$ is \ti{already
  inductive}. Hence, \ict will not generate any new invariant clauses
and will not produce an unwanted invariant even though \Sub{N}{triv}
is obviously buggy.  In Appendix~\ref{ssec:exper3}, we experimentally
compare invariants generated by \ict and \Apqe.

\section{Conclusions}
\label{sec:concl}
We consider \ti{partial} quantifier elimination (PQE) on propositional
CNF formulas with existential quantifiers. PQE allows to unquantify a
\ti{part} of the formula.  We present a PQE algorithm called \Apqe
employing redundancy based reasoning via the machinery of certificate
clauses. To prove a target clause $C$ redundant, \Apqe derives a
clause implying $C$, thus ``certifying'' its redundancy. The version
of \Apqe we describe here can still be drastically improved. We show
that PQE can be used to generate invariants of a sequential
circuit. The goal of invariant generation is to find an \ti{unwanted}
invariant of this circuit indicating that the latter is buggy. Bugs
causing unwanted invariants can be easily overlooked by the existing
methods. We applied \Apqe to identify a bug in a FIFO buffer by
generating an unwanted invariant of this buffer. We also showed that
even the current version of \Apqe is good enough to generate
invariants for HWMCC-13 benchmarks. Our experiments suggest that \Apqe
can be used for detecting hard-to-find bugs in real-life designs.

%
\section{Discussing Some New Results}
\label{sec:retrosp}
Some new results have been obtained after publishing this technical
report. In this section we discuss new developments and their relation
to the machinery of certificate clauses. In
Subsection~\ref{ssec:reuse}, we describe some limitations on reusing
witness certificates. The problem of generation of duplicate
participant certificates is discussed in Subsection~\ref{ssec:dupl}.
Subsection~\ref{ssec:triv} presents some clarification on comparing
\Apqe with \apqe.  In Subsection~\ref{ssec:concl}, we give a few
concluding thoughts.

%
%
\subsection{Limitations on reusing certificates}
\label{ssec:reuse}
In this subsection, we briefly discuss some limitations on reusing
witness\footnote{As we mentioned in Sections~\ref{sec:exmp1} and~\ref{sec:start}, \Apqe
derives participant and witness certificates. Participant
certificates \ti{must be added} to the formula to make the current
target clause redundant. Since participant certificates are
essentially conflict clauses they can be safely reused. On the other
hand, adding witness certificates is \ti{optional} and their reusing
is not trivial. Importantly, the overwhelming majority of certificates
derived by \Apqe are witness certificates.
} certificates.  Let $K$ be a
witness certificate stating redundancy of a clause $C$ in \prob{X}{F}
in subspace \pnt{q}. As we mentioned earlier, the advantage of
deriving $K$ instead of a D-sequent is that the former can be added to
\prob{X}{F} (because $\prob{X}{F} \equiv \prob{X}{F \wedge K}$).
Adding $K$ to \prob{X}{F} makes redundancy of $C$ in subspace \pnt{q}
obvious and so enables \ti{safe} reusing of $K$ as a proof of
redundancy of $C$ in subspace \pnt{q}.

However, adding witness certificates poses the following problem. The
certificate $K$ above can be represented as $K' \vee K''$ where $K'$
is a clause falsified by \pnt{q} and clause $K''$ consists of literals
of $C$. If $K''$ contains \ti{all} the literals of $C$, then adding
$K$ to \prob{X}{F} does not make sense (because $K$ is implied by
$C$).  Of course, one can always use $K$ as a proof of redundancy in
subspace \pnt{q} \ti{without} adding it to \prob{X}{F}. However, in
this case, safe reusing of $K$ requires storing some contextual
information like it is done with D-sequents.  The theory of reusing
D-sequents (that also applies to certificate clauses) is described
in~\cite{eff_pqe}.)

%
%
\subsection{Generation of duplicate  certificates}
\label{ssec:dupl}
When running \Apqe on multiple practical and crafted problems, we
found out that \Apqe often generates duplicate participant
certificates.  Suppose, for instance, that \Apqe proved a target
clause $C$ redundant in \prob{X}{F} in the current subspace and
temporarily removed $C$ from the formula.  When proving another target
clause $C'$, \Apqe may add to the formula a conflict clause
(certifying the redundancy of $C'$ in the current subspace) that is
identical to $C$. Since, in general, the new copy of $C$ needs to be
proved redundant too, generation of duplicate target clauses may lead
to looping\footnote{In Appendix~\ref{app:sound_compl}, to prove the
  completeness of \Apqe, we introduce its modification that does not
  loop. However, this modification is not efficient and hence
  impractical.}.

The reason for looping is that \Apqe processes one target clause at a
time and so uses a collection of search trees rather than a single
one. (On the other hand, \apqe processes all target clauses together,
which allows it to maintain a single search tree thus avoiding
looping.) There are at least two ways to prevent looping.  First, one
can just relax the requirement to process only one target clause at a
time. Suppose, for instance, that a duplicate of a target clause $C$
earlier proved redundant is generated when proving redundancy of
$C'$. Then it makes sense to prove $C$ and $C'$ redundant
\ti{together}. In other words, instead of choosing between two extreme
paradigms of processing all clauses together (\apqe) or one clause at
a time (\Apqe), one can consider joint processing of \ti{subsets} of
target clauses. The second way to avoid looping is to completely avoid
adding new clauses with quantified variables. (So, no duplicates are
generated and looping becomes impossible.)  However, this requires a
radical overhaul of the current approach to PQE
solving~\cite{sac_by_pqe}.

%
%
\subsection{On comparison of \Apqe with \apqe}
\label{ssec:triv}
Let $C$ be a clause to be taken out of the scope of quantifiers in
\prob{X}{F(X,Y)}. Let \pnt{y} be a full assignment to $Y$. Let $C$ be
redundant in \prob{X}{F} in subspace \pnt{y}. A trivial case of
redundancy occurs when $F \setminus \s{C}$ implies $C$ in subspace
\pnt{y}. (In general, this is not the case \ie redundancy of $C$ in
\prob{X}{F} in subspace \pnt{y} does not entail $F \setminus \s{C}
\imp C$ in this subspace.)  Even though we did not describe this fact,
\Apqe \ti{actually employs} a SAT-based procedure for identifying
trivial redundancies. How this is done is described
in~\cite{cav23}. On the other hand, \apqe \ti{does not use} any
procedure for identifying trivial redundancies. This partly explains
the poor performance of \apqe in comparison with \Apqe reported in
Subsection~\ref{ssec:comparison} and Appendix~\ref{ssec:exper1}.  Our
later results presented in~\cite{cav23} show that \apqe supplied with
a procedure for identifying trivial redundancies performs much better
on the examples we used in this report.

%
%
\subsection{A few concluding thoughts}
\label{ssec:concl}
Despite the remarks above, we believe that processing one target
clause at a time is an appealing idea substantiated
experimentally. The remaining issue to be resolved is to find a way
for preventing \Apqe from generation of duplicate target clauses (thus
eliminating the possibility of looping).

\bibliographystyle{IEEEtran}
\bibliography{short_sat,local,l1ocal_hvc}
\clearpage
\vspace{25pt}
\appendix
\noindent{\large \tb{Appendix}}
\section{Proofs}
 \setcounter{proposition}{0}
 \label{app:proofs}
Lemma~\ref{lem:blk_clause} is used in the proof of
Proposition~\ref{prop:blk_clause}.

\begin{lemma}
\label{lem:blk_clause}
Given a formula \prob{X}{F(X)}, let $C$ be a clause blocked in
\prob{X}{F} at $w$. Then $\prob{X}{F} \equiv \prob{X}{F \setminus
  \s{C}}$ i.e. $C$ is redundant in \prob{X}{F}.
\end{lemma}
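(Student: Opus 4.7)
The plan is to prove both directions of $\exists X [F] \equiv \exists X [F \setminus \s{C}]$, which (since $Y$ is empty here) amounts to equisatisfiability of $F$ and $F \setminus \s{C}$. The forward direction is immediate: any full assignment $\pnt{x}$ satisfying $F$ also satisfies the subset $F \setminus \s{C}$.

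For the converse, I would assume $\pnt{x}$ is a full assignment to $X$ satisfying $F \setminus \s{C}$. If $\pnt{x}$ also satisfies $C$, we are done; otherwise $\pnt{x}$ falsifies $C$ and in particular $\pnt{x}(w) = \neg b$, where $b$ is the value of $w$ satisfying $C$ in Definition~\ref{def:blk_cls}. Let $G$ denote the set of clauses of $F$ resolvable with $C$ on $w$, and let $\pnt{z}$ be obtained from $\pnt{x}$ by flipping $w$, so that $\pnt{z}(w) = b$. I would first verify by case analysis that $\pnt{z}$ satisfies every clause of $F \setminus G$: the clause $C$ is satisfied because $\pnt{z}(w) = b$; a clause $D \in (F \setminus G) \setminus \s{C}$ not mentioning $w$ is untouched by the flip; one containing the literal $w = b$ is satisfied by the new value; and one containing $w = \neg b$ but not in $G$ must, by Definition~\ref{def:resol}, share a second opposing literal with $C$, so that since $\pnt{x}$ falsifies $C$, it satisfies the opposite literal inside $D$ on some variable other than $w$, which is preserved under the flip.

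The remaining danger is that clauses of $G$, all of which contain $w = \neg b$, may have been broken. This is precisely where the blocked-clause hypothesis would be invoked. Unfolding Definition~\ref{def:blk_cls} with $Y = \emptyset$, and observing that es-implication with respect to the empty set coincides with equisatisfiability (Definition~\ref{def:es_impl} then reduces to asking that the two $\exists X$-formulas are either both satisfiable or both unsatisfiable), the hypothesis states that $(F \setminus G)_{w=b}$ and $(F \setminus G)_{w=b} \wedge G_{w=b}$ are equisatisfiable as formulas over $X \setminus \s{w}$. The restriction of $\pnt{z}$ to $X \setminus \s{w}$ witnesses satisfiability of the former, so some assignment $\pnt{r}$ to $X \setminus \s{w}$ satisfies $(F \setminus G)_{w=b} \wedge G_{w=b} = F_{w=b}$, and extending $\pnt{r}$ by $w = b$ produces a satisfier of $F$.

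The main obstacle is reconciling the modified notion of ``blocked'' with the classical single-flip argument: classically every resolvent on $w$ is a tautology, so each $D \in G$ can be repaired directly from $\pnt{x}$ on a variable other than $w$, whereas Definition~\ref{def:blk_cls} only guarantees repairability collectively through the es-implication. Once it is observed that for $Y = \emptyset$ this es-implication degenerates to plain equisatisfiability, the standard blocked-clause proof goes through with the pointwise repair of $G$ replaced by a single global appeal to the hypothesis.
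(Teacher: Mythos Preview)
Your proposal is correct and follows essentially the same route as the paper: show equisatisfiability, handle the trivial direction, flip the value of $w$ in a satisfier of $F\setminus\{C\}$ to obtain a satisfier of $F\setminus G$, and then invoke the blocked-clause hypothesis (which for $Y=\emptyset$ reduces to equisatisfiability of $(F\setminus G)_{w=b}$ and $F_{w=b}$) to conclude. The only difference is cosmetic: the paper argues by contradiction that any $D\in F\setminus G$ falsified after the flip would have to lie in $G$, whereas you perform an explicit four-way case split and in particular spell out why a clause containing $\overline{l(w)}$ that is \emph{unresolvable} with $C$ must already be satisfied via a second opposing literal; the paper leaves that sub-step implicit.
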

\begin{proof}
Let us prove that $F$ and $F \setminus \s{C}$ are equisatisfiable (and
so $\prob{X}{F} \equiv \prob{X}{F \setminus \s{C}}$).  The
satisfiability of $F$ obviously implies that of $F \setminus
\s{C}$. Let us show that the converse is true as well.  Let \pnt{x} be
a full assignment to $X$ satisfying $F \setminus \s{C}$. If \pnt{x}
satisfies $C$, it satisfies $F$ and our proof is over.  Now assume
that \pnt{x} falsifies $C$ and hence falsifies $F$. Let \pent{x}{fl}
be the assignment obtained from \pnt{x} by flipping the value of
$w$.  (So \pent{x}{fl} satisfies $C$.) Let $G$ be the set of clauses
of $F$ resolvable with $C$ on $w$.  Let $w=b$ satisfy $C$ where $b \in
\s{0,1}$. (So, $w$ is assigned $b$ in \pent{x}{fl}, because \pnt{x}
falsifies $C$.)

First, let us show that \pent{x}{fl} satisfies $F \setminus G$. Assume
the contrary i.e. \pent{x}{fl} falsifies a clause $D$ of $F \setminus
G$. (Note that $D$ is different from $C$ because the latter is
satisfied by \pent{x}{fl}.) Assume that $D$ does not contain the
variable $w$. Then $D$ is falsified by the assignment \pnt{x} and
hence the latter does not satisfy $F \setminus \s{C}$. So we have a
contradiction.  Now, assume that $D$ contains $w$. Then $D$ is
resolvable with $C$ on $w$ and $D \in G$. So $D$ cannot be in $F
\setminus G$ and we have a contradiction again.

Since \pent{x}{fl} satisfies $F \setminus G$, then \Sub{(F \setminus
  G)}{w=b} is satisfiable.  By definition of a blocked clause (see
Definition~\ref{def:blk_cls}), $G$ is redundant in \prob{X}{F} in
subspace $w=b$.  So formula \Sub{F}{w=b} is satisfiable.  Hence $F$ is
satisfiable too.
\end{proof}

%
%

\begin{proposition}
  Given a formula \prob{X}{F(X,Y)}, let $C$ be a clause blocked in
\prob{X}{F} at $w \in X$. Then $C$ is redundant in \prob{X}{F}
i.e. \prob{X}{F}\,$\equiv$ \prob{X}{F \setminus \s{C}}.  So,
$C$ is es-implied by $F \setminus \s{C}$ in \prob{X}{F}.
\end{proposition}
\begin{proof}
One needs to show that for every full assignment \pnt{y}\, to $Y$,
formulas \cof{F}{y} and \cof{(F \setminus \s{C})}{y} are
equisatisfiable.  If \pnt{y} satisfies $C$, it is trivially
true. Assume that \pnt{y} does not satisfy $C$. From
Definition~\ref{def:blk_cls} it follows that if $C$ is blocked in
\prob{X}{F} at a variable $w$, then \cof{C}{y} is blocked in
\cof{(\prob{F}{X})}{y} at $w$.  Then from Lemma~\ref{lem:blk_clause}
if follows that \cof{C}{y} is redundant in \cof{(\prob{F}{X})}{y}
\end{proof}

%
%
\begin{proposition}
  Given a formula \prob{X}{F(X,Y)}, let $\Sub{C}{trg} \in F$.  Let
\pnt{q} be an assignment to $X \cup Y$ that does not satisfy
\Sub{C}{trg}. Let \Sub{C}{trg} be blocked in \prob{X}{F} at $w \in X$
in subspace \pnt{q} where \mbox{$w \not\in \Va{q}$}. Let $l$($w$) be
the literal of $w$ present in \Sub{C}{trg}.  Let $K'$ denote the
longest clause falsified by \pnt{q}. Let $K''$ be a clause formed from
$l(w)$ and a subset of literals of \Sub{C}{trg} such that every clause
of \cof{F}{q} unresolvable with \cof{(\Sub{C}{trg})}{q} on $w$ is
unresolvable with \cof{(K'')}{q} too. Let $\Sub{K}{bct} = K' \vee
K''$. Then \mbox{$\cof{(\Sub{K}{bct})}{q} \imp
  \cof{(\Sub{C}{trg})}{q}$} and \Sub{K}{bct} is redundant in
\prob{X}{\Sub{K}{bct} \wedge (F \setminus \s{\Sub{C}{trg}})}.
\end{proposition}

\begin{proof}
The fact that \mbox{$\cof{(\Sub{K}{bct})}{q} \imp
  \cof{(\Sub{C}{trg})}{q}$} trivially follows from the definition of
\Sub{K}{bct}. The latter equals $K' \vee K''$ where $K'$ is falsified
by \pnt{q} and $K''$ consists only of (some) literals of
\Sub{C}{trg}. Now we prove that the clause \Sub{K}{bct} is redundant
in
\prob{X}{\Sub{K}{bct}\!\wedge\!(F\!\setminus\!\s{\Sub{C}{trg}})}. Let
$H$ denote $F \setminus \s{\Sub{C}{trg}}$.  One needs to show that for
every full assignment \pnt{y} to $Y$, \cof{(\Sub{K}{bct} \wedge H)}{y}
and \cof{H}{y} are equisatisfiable.  If \pnt{y} satisfies
\Sub{K}{bct}, it is trivially true. Let \pnt{y} not satisfy
\Sub{K}{bct}. (This means that the variables of $\Va{y} \cap \Va{q}$,
if any, are assigned the same value in \pnt{y} and \pnt{q}.) The
satisfiability of \cof{(\Sub{K}{bct} \wedge H)}{y} obviously implies
that of \cof{H}{y}. Below, we show \ti{in three steps} that the
converse is true as well. First, we introduce an assignment
\pent{p}{fl} such that $\pnt{y} \subseteq \pent{p}{fl}$ and $\pnt{q}
\subseteq \pent{p}{fl}$. (Here 'fl' stands for 'flipped'.)  Second, we
prove that \pent{p}{fl} satisfies $\cof{F}{q} \setminus \cof{G}{q}$
where $G$ is the set of clauses resolvable with \Sub{C}{trg}.  Third,
we show that the satisfiability of $\cof{F}{q} \setminus \cof{G}{q}$
and the fact that \Sub{C}{trg} is blocked imply that
\cof{(\Sub{K}{bct} \wedge H)}{y} is satisfiable.

\ti{Step} 1. Let \pnt{p}\, denote a full assignment to $X \cup Y$ such
that $\pnt{y} \subseteq \pnt{p}$\, and \pnt{p}\, satisfies \cof{H}{y}.
If \pnt{p}\, satisfies \cof{(\Sub{K}{bct} \wedge H)}{y}, our proof is
over.  Assume that \pnt{p}\, falsifies \cof{(\Sub{K}{bct} \wedge
  H)}{y}. Then \pnt{p}\, falsifies \Sub{K}{bct}. This means that
$\pnt{q} \subseteq \pnt{p}$.  Let $w=b$ denote the assignment to $w$
in \pnt{p}. Denote by \pent{p}{fl} the assignment obtained from
\pnt{p}\, by flipping the value of $w$ from $b$ to
$\overline{b}$. Denote by \pent{q}{ext} the extension $\pnt{y} \cup
\pnt{q} \cup \s{w=\overline{b}}$ of the assignment \pnt{q}.  Note that
$\pent{q}{ext} \subseteq \pent{p}{fl}$. Besides, due to the assignment
$w = \overline{b}$, both \pent{q}{ext} and \pent{p}{fl} satisfy
\Sub{K}{bct} and \Sub{C}{trg}.

\ti{Step} 2. Let $G$ denote the set of clauses of $F$ resolvable with
\Sub{C}{trg} on $w$.  Then \cof{G}{q} is the set of clauses of \cof{F}{q}
resolvable with \cof{(\Sub{C}{trg})}{q} on $w$. Let us show that
\pent{p}{fl} satisfies $\cof{F}{q} \setminus \cof{G}{q}$. Assume the
contrary i.e. there is a clause $D \in \cof{F}{q} \setminus
\cof{G}{q}$ falsified by \pent{p}{fl}. First, assume that $D$ does not
contain $w$. Then $D$ is falsified by \pnt{p} as well. So, \pnt{p}
falsifies \cof{F}{q} and hence \cof{H}{q} (because \cof{(\Sub{C}{trg})}{q}
 is satisfied by \pent{p}{fl} and so is different from $D$). Thus, we
 have a contradiction.  Now, assume that $D$ contains the literal
 $\overline{l(w)}$. Then it is resolvable with clause
 \cof{(\Sub{K}{bct})}{q}. This means that $D$ is resolvable with
 \cof{(\Sub{C}{trg})}{q}\, too. (By our assumption, every clause of
 \cof{F}{q} unresolvable with \cof{(\Sub{C}{trg})}{q} is unresolvable
 with \cof{(\Sub{K}{bct})}{q}\, too.) Then $D$ cannot be in
 $\cof{F}{q} \setminus \cof{G}{q}$ and we have a contradiction.

\ti{Step} 3. \vspace{2pt} Since \pent{p}{fl} satisfies $\cof{F}{q}
\setminus \cof{G}{q}$, the formula $\ccof{F}{q}{ext} \setminus
\ccof{G}{q}{ext}$ is satisfiable.  The same applies to \ccof{(F
  \setminus G)}{q}{ext}. Since \Sub{C}{trg} is blocked in \prob{X}{F} at
$w$ in subspace \pnt{q}, it is also blocked in \prob{X}{F} in subspace
$\pnt{y} \cup \pnt{q}$.  Then \ccof{(F \setminus G)}{q}{ext} es-implies
\ccof{G}{q}{ext} (see Definition~\ref{def:blk_cls}) and \ccof{F}{q}{ext} is
satisfiable too. Since \Sub{K}{bct} is satisfied by \pent{q}{ext},
then \ccof{(\Sub{K}{bct} \wedge F)}{q}{ext} is satisfiable. Hence
\cof{(\Sub{K}{bct} \wedge F)}{y} is satisfiable and so is
\cof{(\Sub{K}{bct} \wedge H)}{y}
\end{proof}


\section{Backtracking By \Apqe}
\label{app:bcktr}

When a SAT-solver encounters a conflict, it generates a conflict
clause and backtracks to the smallest decision level where this clause
is unit. So, an assignment can be derived from this clause.  In
contrast to a SAT-solver, the goal of a PQE-solver is to prove a
target clause \Sub{C}{trg} redundant rather than find a satisfying
assignment. So, \Apqe backtracks slightly differently from a
SAT-solver. After \Apqe derives a certificate $K$ proving \Sub{C}{trg}
in the current subspace, it backtracks to the smallest decision level
at which the \ti{conditional} of the derived certificate $K$ (rather
than $K$ itself) is unit.

%
%

\begin{definition}
  \label{def:cond}
Let $K$ be a certificate stating the redundancy of clause \Sub{C}{trg}
in a subspace. The clause consisting of the literals of $K$ not shared
with \Sub{C}{trg} is called the \tb{conditional of} \bm{K}.
\end{definition}

%
%
\setlength{\intextsep}{4pt}
\setlength{\textfloatsep}{10pt}
\begin{wrapfigure}{l}{1.5in}
\small
\begin{tabbing}
aaaa\=bb\=cc\=dd\= \kill

\scriptsize{1}\>  Find $F^*_1(Y)$ such that \\
\scriptsize{2}\>   \pqe{X}{F^*_1}{F_1}{F_2}  \\
\scriptsize{3}\> $Y\!=\!\s{y_1}$,\,$X\!=\!\s{x_2,x_3,x_4,\dots}$\\
\scriptsize{4}\> $F_1 = \s{C_1}$, $C_1=x_2 \vee x_4$\\
\scriptsize{5}\> $F_2\!=\!\s{C_2,\!C_3,\!\dots}$,$C_2\!=\!y_1\!\vee\!x_3$,\\
\scriptsize{6}\Tt $C_3\!=\!\overline{x}_3\!\vee\!x_4,\dots$ \\[4pt]

\scriptsize{7}\>  pick. $C_1 \in F_1$ to prove red. \\[4pt]
\scriptsize{8}\>    --- call \prd ---\\
\scriptsize{9}\> decision: $y_1=0$  at level 1 \\
\scriptsize{10}\> \ti{BCP}:$(C_2\!:x_3\!=\!1)$ \\
\scriptsize{11}\ttt $C_3 = x_4$ in curr. subsp.\\[4pt]
\scriptsize{12}\> LEAF: $C_3$ impl. $C_1$ at  level 1\\
\scriptsize{13}\> $K_1 = y_1 \vee x_4$ is the final cert. \\
\scriptsize{14}\> $K_1 = \mi{Res}(C_3,C_2,x_3)$ \\ [4pt]

\scriptsize{15}\> backtracking to level 0 \\
\scriptsize{16}\> \ti{BCP}:  $(K_1:\!y_1=1)$ \\[4pt]
\scriptsize{17}\>  .... \\

\end{tabbing} 
\vspace{-15pt}
\caption{Backtracking by \Apqe}
\label{fig:exmp2}
\end{wrapfigure}

If the conditional of $K$ is empty, $K$ implies \Sub{C}{trg} in the
entire space. Otherwise, $K$ implies \Sub{C}{trg} only in subspaces
\pnt{q} where the conditional of $K$ is falsified by \pnt{q}.  One can
derive an implied assignment from $K$ when its \tb{conditional is
  unit} like this is done by a SAT-solver when a clause becomes unit.

%
%
\begin{example}
Consider the example of Fig.~\ref{fig:exmp2} showing the operation of
\Apqe. This figure gives only the relevant part of formula $F_2$ and
the relevant fragment of the execution trace.  \prd begins proving the
target clause $C_1=x_2 \vee x_4$ redundant by the decision assignment
$y_1 = 0$.  Then it calls \ti{BCP} that derives $x_3=1$ from the
clause $C_2$.  At this point, $C_3$ becomes the unit clause $x_4$
implying $C_1$. So, \ti{BCP} returns $C_3$ as the reason for
backtracking (i.e. as the clause \Sub{K}{bct} in line 6 of
Fig.~\ref{fig:prv_red}). Then the \lrn procedure generates the final
certificate \mbox{$K_1 = y_1 \vee x_4$} by resolving $C_3$ and $C_2$
to drop the non-decision variable $x_3$ assigned at level 1 (line 14).

The \ti{conditional} of $K_1$ is the unit clause $y_1$ because the
literal $x_4$ is shared by $K_1$ and the target clause $C_1$.  \prd
backtracks to level 0, the smallest level where the conditional of
$K_1$ is unit. (Note that $K_1$ itself is not unit at level 0). Then
\prd runs \ti{BCP} and derives the assignment $y_1\!=\!1$ from $K_1$
even though $K_1$ is not unit at level 0.  This derivation is possible
because $K_1$ certifies that the redundancy of $C_1$ in subspace
$y_1\!=\!0$ is \ti{already proved} $\blacksquare$
\end{example}
\vspace{-5pt}

As we mentioned above, in the general case, after deriving a
certificate $K$, \prd backtracks to the smallest decision level where
the conditional of $K$ is unit. The assignment derived from $K$ is
added to the assignment queue $Q$ (lines 10 and 12 of
Fig.~\ref{fig:prv_red}). If $K$ shares no literals with \Sub{C}{trg},
the \prd procedure backtracks as a regular SAT-solver, i.e. to the
smallest decision level where $K$ is unit.


\clearpage
\section{Operation Of \Apqe When \Sub{C}{trg} Becomes Unit}
\label{app:exmp}

%
%
\setlength{\intextsep}{4pt}
\setlength{\textfloatsep}{10pt}
\begin{wrapfigure}{l}{2.1in}
\small
\begin{tabbing}
aaa\=bb\=cc\=dd\= \kill

\scriptsize{1}\>  Find $F^*_1(Y)$ such that \\
\scriptsize{2}\>   \pqe{X}{F^*_1}{F_1}{F_2}  \\
\scriptsize{3}\> $Y\!=\!\s{y_1}$,\,$X\!=\!\s{x_2,x_3}$\\
\scriptsize{4}\> $F_1 = \s{C_1}$, $C_1=y_1 \vee x_2$\\
\scriptsize{5}\> $F_2\!=\!\s{C_2,C_3}$,$C_2\!=\!\overline{x}_2\!\vee\!x_3$,\\
\scriptsize{6}\> $C_3\!=\!\overline{y}_1\!\vee\!\overline{x}_3$ \\[4pt]

\scriptsize{7}\> putting $C_1$ to target level $A$\\
\scriptsize{8}\>  \bm{\mi{PrvRed}}: proving $C_1$ redund. \\
\scriptsize{9}\> dec.: $y_1=0$  at dec. level 1 \\
\scriptsize{10}\Tt \ti{BCP}: $C_1$ is unit at dec. level 1\\
\scriptsize{11}\Tt creating target level $B$ \\
\scriptsize{12}\Tt of clauses res. with $C_1$ on $x_2$\\
\scriptsize{13}\Tt making impl. assign. $x_2=1$ \\
\scriptsize{14}\Tt picking $C_2$ as a new targ.\\[4pt]
\scriptsize{15}\Tt \bm{\mi{PrvRed}}: prov. $C_2$ redund. \\ 
\scriptsize{16}\Tt in subsp. $y_1=0,x_2=1$ \\
\scriptsize{17}\Tt LEAF: $C_2$ is blocked at $x_3$\\
\scriptsize{18}\Tt Der. cert. $K'\!=\!y_1\!\vee\!x_3$  \\
\scriptsize{19}\Tt $C_2$ is red. in subsp. above \\
\scriptsize{20}\Tt $C_2$ is temporarily removed \\
\scriptsize{21}\> eliminating targ. level B\\
\scriptsize{22}\> undoing $x_2 = 1$\\
\scriptsize{23}\>  $C_2$ is restored in the formula\\
\scriptsize{24}\> LEAF: $C_1$ is blocked at $x_2$ \\
\scriptsize{25}\> der. cert. $K''\!= y_1\!\vee\!x_2$\\
\ttt                      $\cdot\cdot\cdot$ \\
\end{tabbing} 
\vspace{-23pt}
\caption{\Sub{C}{trg} becomes unit}
\label{fig:exmp3}
\end{wrapfigure}

In Section~\ref{sec:trg_unit}, we described how \Apqe operates when
the current target clause \Sub{C}{trg} becomes unit. In this appendix,
we give a concrete example.  Consider solving the PQE problem shown in
Fig.~\ref{fig:exmp3} by lines 1-6. First, $C_1$ is picked as a clause
to prove. We will refer to it as the \tb{primary} target assuming that
it makes up target level $A$. After decision assignment $y_1=0$, the
clause $C_1$ turns into unit clause $x_2$ (lines 9-10). \tb{Denote}
the current assignment (i.e. $y_1=0$) as \pnt{q}. At this point, a
SAT-solver would simply derive the assignment $x_2=1$. However, the
goal of \prd is not to check if $F_1 \wedge F_2$ is satisfiable but to
prove $C_1$ redundant. The fact that $C_1$ is falsified in subspace
$\pnt{q}\,\cup \s{x_2=0}$ does not say anything about whether $C_1$ is
\ti{redundant} there.

So, \prd creates a new target level (referred to as level $B$).  It
consists of the clauses resolvable with $C_1$ on $x_2$. Suppose all
clauses of this level are redundant in subspace $\pnt{q}\, \cup
\s{x_2=1}$. Then according to Definition~\ref{def:blk_cls}, $C_1$ is
blocked (and hence redundant) in \prob{X}{F_1 \wedge F_2} in subspace
\pnt{q}. In our case, level $B$ consists only of $C_2$. So, \prd
recursively calls itself to prove redundancy of $C_2$ in subspace
$\pnt{q}\,\cup \s{x_2=1}$ (lines 15-20). Note that $C_2$ is blocked at
$x_3$ in this subspace since $C_3$ (the clause resolvable with $C_2$
on $x_3$) is satisfied by $y_1=0$. Then using
Proposition~\ref{prop:blk_to_es}, \prd derives the certificate $K' =
y_1 \vee x_3$ asserting the redundancy of $C_2$. The latter is
temporarily removed from the formula as redundant (see
Remark~\ref{rem:loc_rem}). At this point, the second activation of \prd
terminates.

Then the first activation of \prd undoes target level $B$ and
assignment $x_2\!=\!1$. The clause $C_2$ is restored in the formula
(lines 21-23).  Now, the primary target $C_1$ is blocked at $x_2$,
since $C_2$ is proved redundant in subspace $\pnt{q}\,\cup
\s{x_2\!=\!1}$. Using Proposition~\ref{prop:blk_to_es}, \prd derives
the certificate \mbox{$K''\!=\!y_1\!\vee\!x_2$} proving redundancy of
$C_1$ in the entire space.

\section{Certificate Generation When A Conflict Occurs}
\label{app:spec_case}

In this appendix, we discuss in more detail the generation of a
certificate by the \lrn procedure when a conflict occurs. As before,
we denote $F_1 \wedge F_2$ by $F$.  Let \Sub{C}{trg} be the current
target clause. Let \Sub{K}{bct} be the clause of $F$ falsified in this
conflict. (Here, we use the notation of Figure~\ref{fig:prv_red}
describing the \prd procedure).  First, consider the case when
\mbox{\Sub{K}{bct} $\neq$ \Sub{C}{trg}}. Then \lrn generates a
certificate $K$ as described in Subsection~\ref{ssec:prv_red}. Namely,
it starts with \Sub{K}{bct} gradually resolving out literals assigned
at the conflict level by non-decision assignments. Since \Sub{C}{trg}
is not involved in derivation of $K$, the latter is a witness
certificate.

%
%
\setlength{\intextsep}{4pt}
\setlength{\textfloatsep}{10pt}
\begin{wrapfigure}{l}{1.6in}
\small
\begin{tabbing}
aaa\=bb\=cc\=dd\= \kill

\scriptsize{1}\>  Find $F^*_1(Y)$ such that \\
\scriptsize{2}\>   \pqe{X}{F^*_1}{F_1}{F_2}  \\
\scriptsize{3}\> $Y\!=\!\s{y_1}$,\,$X\!=\!\s{x_2,x_3}$\\
\scriptsize{4}\> $F_1 = \s{C_1}$, $C_1=\overline{x}_2 \vee x_3$\\
\scriptsize{5}\> $F_2\!=\!\s{C_2,\!C_3}$,$C_2\!=\!\overline{y}_1\!\vee\!x_2$,\\
\scriptsize{6}\Tt $C_3\!=\!\overline{y}_1\!\vee\!\overline{x}_3$ \\[4pt]

\scriptsize{7}\>  pick. $C_1 \in F_1$ to prove red. \\[4pt]
\scriptsize{8}\>    --- call \prd ---\\
\scriptsize{9}\> decision: $y_1=0$  at level 1 \\[4pt]
\scriptsize{10}\> LEAF: $C_1$ is blocked at $x_2$\\
\scriptsize{11}\> (since $C_2$ is sat. by $y_1$ = 0)\\
\scriptsize{12}\> $K_1\!=\!y_1\!\vee\!\overline{x}_2$ is\,a\,witness\,cert.\\
\scriptsize{13}\> $K_1$ is not added to $F_1\!\wedge\!F_2$\\[4pt]

\scriptsize{14}\> backtracking to level 0 \\
\scriptsize{15}\> \ti{BCP}:  $(K_1:\!y_1=1)$ \\
\scriptsize{16}\ttt $(C_2\!:x_2\!=\!1)(C_3\!:x_3\!=\!0)$ \\[4pt]
\scriptsize{17}\> LEAF: conflict at  level 0\\
\scriptsize{18}\> $C_1\!=\!\overline{x}_2\!\vee\!x_3$ is falsified\\
\scriptsize{19}\> $\hat{K} = \overline{y}_1$ is a new clause\\
\scriptsize{20}\Tt falsif. in curr. subspace \\
\scriptsize{21}\Tt $R_1=\mi{Res}(C_1,C_2,x_2)$,\\
\scriptsize{22}\Tt $\hat{K}=\mi{Res}(R_1,C_3,x_3)$ \\
\scriptsize{23}\> $\hat{K}$ is added to $F_1\!\wedge\!F_2$\\
\scriptsize{24}\> $K_2\!=\!\overline{x}_2$ is a witness cert.\\
\scriptsize{25}\Tt $K_2\!=\!\mi{Res}(K_1,\!\hat{K},\!y_1)$ \\
\scriptsize{26}\>  $K_2$ is not added to $F_1\!\wedge\!F_2$\\
\ttt                      $\cdot\cdot\cdot$ \\
\end{tabbing} 
\vspace{-20pt}
\caption{Adding a special clause after a conflict}
\label{fig:exmp5}
\end{wrapfigure}

Now, consider the case when \Sub{K}{bct} = \Sub{C}{trg}. If, no
relevant assignment is derived from a witness certificate, \lrn
generates the resulting certificate $K$ as described above. Since
\Sub{C}{trg} is involved in derivation of $K$, the latter is a
\ti{participant} certificate that is added to the formula. If an
assignment relevant to the conflict is derived from a witness
certificate, \lrn acts differently. Namely, it derives a \ti{witness}
certificate $K$ \ti{and} a special clause $\hat{K}$ that is added to
the formula.  (For the sake of simplicity, we did not mention this
fact in the pseudo-code of the \prd procedure.)

Figure~\ref{fig:exmp5} illustrates adding a special clause. Here
$C_1\!=\!  \overline{x}_2\!\vee\!x_3$ is the target clause. In the
branch $y_1=0$, \prd proves $C_1$ redundant by deriving a witness
certificate $K_1\!=\!y_1 \vee\, \overline{x}_2$ (lines 9-13).  Then
\prd backtracks to level 0 and runs \ti{BCP} to derive $y_1\!=\!1$
from $K_1$, $x_2\!=\!1$ from $C_2$ and $x_3\!=\!0$ from $C_3$. At this
point, $C_1$ is falsified i.e. a conflict occurs. Assume we construct
a certificate $K_2\!=\!\overline{x}_2$ by resolving $C_1$ with $C_2$,
$C_3$, and $K_1$ (i.e. with the clauses from which the relevant
assignments were derived). Then we have a problem. On one hand, $K_2$
is a participant certificate that has to be added to $F$ since the
target clause $C_1$ was involved in building $K_2$. On the other hand,
$K_2$ may not be implied by $F$ since a witness certificate $K_1$ was
involved in producing $K_2$.  (A witness certificate $K$ is, in
general, only es-implied by $F$ in \prob{X}{K \wedge F}.) This breaks
the invariant maintained by \Apqe that only clauses implied by $F$ are
added to it.

The \lrn procedure addresses the problem above as follows. First, it
generates a clause $\hat{K}=\overline{y}_1$ that is falsified in the
current subspace and so ``replaces'' $C_1$ as the reason for the
conflict.  $\hat{K}$ is built without using witness certificates and
so \ti{can} be added to $F$. It is obtained by resolving $C_1$ with
$C_2$ and $C_3$ and is added to $F$ (lines 19-23). Then \lrn derives
the certificate $K_2 = \overline{x}_2$ by resolving $\hat{K}$ and
$K_1$. The clause $K_2$ certifies the redundancy of the target clause
$C_1$ in the entire space. Note that $K_2$ was derived using $\hat{K}$
instead of the target clause $C_1$. So, it is a witness certificate
that does not have to be added to the formula.

Here is how one handles the general case when \Sub{K}{bct} =
\Sub{C}{trg} and a witness certificate is involved in the
conflict. First, one produces a special clause $\hat{K}$.  It is
obtained by resolving \Sub{C}{trg} with clauses of $F$ from which
relevant assignments were derived. This process \ti{stops} when the
assignment derived from a \ti{witness certificate} is reached. Then
$\hat{K}$ is added to the formula $F$. (This can be done since witness
certificates are not used in derivation of $\hat{K}$.) After that,
\lrn generates a certificate $K$ starting with $\hat{K}$ as a clause
falsified in the current subspace. (That is, $\hat{K}$ replaces
\Sub{C}{trg} as the cause of the conflict.) Since \Sub{C}{trg} is not
involved in generation of $K$, the latter is a witness certificate.

\section{Correctness of \Apqe}
\label{app:sound_compl}
In this appendix, we give a proof that \Apqe is correct.  Let \Apqe be
used to take $F_1$ out of the scope of quantifiers in
$\exists{X}[F_1(X,Y) \wedge F_2(X,Y)]$. We will denote $F_1 \wedge
F_2$ by $F$. In Subsection~\ref{ssec:sound}, we show that \Apqe is
sound. Subsection~\ref{ssec:mstart} discusses the problem of
generating duplicate clauses by \Apqe and describes a solution to this
problem.  In Subsection~\ref{ssec:compl}, we show that the versions of
\Apqe that do not produce duplicate clauses are complete.

\subsection{\Apqe is sound}
\label{ssec:sound}

In its operation, \Apqe adds participant
certificates and removes target clauses from $F$.  Denote the initial
formula $F$ as \Sup{F}{ini}. Let \pnt{y} be a full assignment to the
variables of $Y$ (i.e. unquantified ones). Below, we demonstrate that
for every subspace \pnt{y}, \Apqe preserves the equisatisfiability
between the \Sup{F}{ini} and the current formula $F$. That is,
\mbox{\prob{X}{\Sup{F}{ini}} $\equiv$ \prob{X}{F}}. Then we use this
fact to show that \Apqe produces a correct solution.

First, consider adding participant certificates by \Apqe. As we
mention in Section~\ref{sec:start}, every clause added to $F$ is
\ti{implied} by $F$. (If a clause $K$ is es-implied by $F$ in
\prob{X}{K \wedge F}, it is used only as a \ti{witness} certificate
and is not added to $F$.) So, adding clauses cannot break
equisatisfiability of $F$ and \Sup{F}{ini} in a subspace \pnt{y}.

Now, we consider removing target clauses from $F$ by \Apqe.  A target
clause \Sub{C}{trg} is permanently removed from the formula only if a
certificate $K$ implying \Sub{C}{trg} in the entire space is
derived. $K$ is obtained by resolving clauses of the current formula
$F$ and witness certificates (if any). Derivation of $K$ is correct
due to correctness of Proposition~\ref{prop:blk_to_es} (describing
generation of clauses that are es-implied rather than implied by the
formula) and soundness of resolution.  So, removing \Sub{C}{trg} from
$F$ cannot break equisatisfiability of $F$ and \Sup{F}{ini} in some
subspace \pnt{y}.

The fact that $\prob{X}{\Sup{F}{ini}} \equiv \prob{X}{F}$ entails that
\Apqe produces a correct solution. Indeed, \Apqe terminates when the
current formula $F_1$ does not contain a quantified clause.  So, the
\ti{final} formula $F$ can be represented as $F_1(Y) \wedge
F_2(X,Y)$. Then \mbox{\prob{X}{F^{\mi{ini}}_1 \wedge F^{\mi{ini}}_2}
  $\equiv F_1 \wedge$ \prob{X}{F_2}}. \Apqe does not add any clauses
to $F_2$. Hence, the final and initial formulas $F_2$ are
identical. So, $\prob{X}{F^{\mi{ini}}_1 \wedge F^{\mi{ini}}_2} \equiv
F^*_1 \wedge \prob{X}{F^{\mi{ini}}_2}$ where $F^*_1$ is the final
formula $F_1$.

%
%
\subsection{Avoiding generation of duplicate clauses}
\label{ssec:mstart}
The version of \prd described in
Sections~\ref{sec:exmp1}-\ref{sec:trg_unit} may generate a duplicate
of a quantified clause that is currently \ti{proved redundant}. To
avoid generating duplicates one can modify \Apqe as follows.  (We did
not implement this modification due to its inefficiency. We present it
just to show that the problem of duplicates can be fixed in
principle.)  We will refer to this modification as \Mapqe.

Suppose \prd generated a quantified clause $C$ proved redundant
earlier. This can happen only when all variables of $Y$ are assigned
because they are assigned before those of $X$. Then \Mapqe discards
the clause $C$, undoes the assignment to $X$, and eliminates all
recursive calls of \prd. That is \Mapqe returns to the original call
of \prd made in the main loop (Fig.~\ref{fig:mloop}, line 7).  Let
\Sub{C}{trg} be the target clause of this call of \prd and \pnt{y}\,
be the current (full) assignment to $Y$. At this point \Mapqe calls an
internal SAT-solver to prove redundancy of \Sub{C}{trg} in subspace
\pnt{y}.  This goal is achieved by this SAT-solver via generating a
witness or participant certificate implying \Sub{C}{trg} in subspace
\pnt{y} (see below). After that, \prd goes on as if it just finished
line 10 of Figure~\ref{fig:prv_red}.

Let $B(Y)$ denote the longest clause falsified by \pnt{y}. Suppose the
internal SAT-solver of \Mapqe proves \cof{F}{y} unsatisfiable. (Recall
that $F$ denotes $F_1 \wedge F_2$.)  Then the clause $B$ is a
certificate of redundancy of \Sub{C}{trg} in \cof{F}{y}. If
\Sub{C}{trg} is involved in proving \cof{F}{y} unsatisfiable, $B$ is a
participant certificate. The \prd procedure adds $B$ to $F$ to make
\Sub{C}{trg} redundant in subspace \pnt{y}.  If \cof{F}{y} is proved
unsatisfiable without using \Sub{C}{trg}, then $B$ is a witness
certificate that is not added to $F$.

Suppose that \cof{F}{y} is satisfiable. Then the internal SAT-solver
above derives an assignment \pnt{p} satisfying \cof{F}{y} where
$\pnt{y} \subseteq \pnt{p}$. Note that \pnt{y} does not satisfy
\Sub{C}{trg} since, otherwise, \prd would have already proved
redundancy of \Sub{C}{trg} in subspace \pnt{y}. Hence, \pnt{p}
satisfies \Sub{C}{trg} by an assignment to a variable $w \in X$. Then
\prd derives a witness certificate $K$ equal to $B \vee l(w)$ where
$l(w)$ is the literal of $w$ present in \Sub{C}{trg}. It is not hard
to show that $K$ is indeed a certificate. First, it implies
\Sub{C}{trg} in subspace \pnt{y} certifying its redundancy
there. Second, $K$ is es-implied by $F \setminus \s{\Sub{C}{trg}}$ in
\prob{X}{K \wedge (F \setminus \s{\Sub{C}{trg}})}.

%
%
\vspace{-8pt}
\subsection{\Apqe is complete}
\label{ssec:compl}
In this subsection, we show the completeness of the versions of \Apqe
that do not generate duplicate clauses. (An example of such a version
is given in the previous subsection).  The completeness of \Apqe
follows from the fact that
\begin{itemize}
\item the number of times \Apqe calls the \prd procedure (to prove redundancy of the
  current target clause) is finite;
\item the number of steps performed by one call of \prd is finite.
\end{itemize}

So, \Apqe always terminates.  First, let us show that \prd is called a
finite number of times. By our assumption, \Apqe does not generate
quantified clauses seen before. So, the number of times \prd is called
in the main loop of \Apqe (see Figure~\ref{fig:mloop}) is finite. \prd
recursively calls itself when the current target clause \Sub{C}{trg}
becomes unit. The number of such calls is finite (since the number of
clauses that can be resolved with \Sub{C}{trg} on its unassigned
variable is finite). The depth of recursion is finite here. Indeed,
before a new recursive call is made, the unassigned variable $w \in X$
of \Sub{C}{trg} is assigned and $X$ is a finite set.  Summarizing, the
number of recursive calls made by \prd invoked in the main loop of
\Apqe is finite.

Now we prove that the number of steps performed by a single call of
\prd is finite.  (Here we ignore the steps taken by \ti{recursive}
calls of \prd.) Namely, we show that \prd examines a finite search
tree.  The number of branching nodes of the search tree built by \prd
is finite because $X \cup Y$ is a finite set. Let us show that \prd
indeed builds a tree. That is \prd does not have ``\ti{holes}'' and
always reaches a \ti{leaf} i.e. a node where a backtracking condition
is met. Below, we list the four kinds of leafs reached by \prd.  (The
backtracking conditions are identified by the \ti{BCP} procedure
called by \prd.)  Let \pnt{q} specify the current assignment to $Y
\cup X$. A leaf of the first kind is reached when the target clause
\Sub{C}{trg} becomes unit in subspace \pnt{q}. Then \ti{BCP} calls the
\ti{Rcrs} procedure (line 11 of Fig.~\ref{fig:bcp}) and \prd
backtracks. \prd reaches a leaf of the second kind when \ti{BCP} finds
a clause of $F$ implying \Sub{C}{trg} in subspace \pnt{q}. A leaf of
the third kind is reached when \ti{BCP} identifies a clause falsified
by \pnt{q} (i.e. a conflict occurs). \prd reaches a leaf of the fourth
kind when the target clause \Sub{C}{trg} is blocked in \prob{X}{F} in
subspace \pnt{q}.

If $F$ is unsatisfiable in subspace \pnt{q}, \prd always reaches a
leaf before all variables of $Y \cup X$ are assigned. (Assigning all
variables without a conflict, i.e. without reaching a leaf of the
third kind, would mean that $F$ is satisfiable in subspace \pnt{q}.)
Let us show that if $F$ is \ti{satisfiable} in subspace \pnt{q}, \prd
also always reaches a leaf before every variable of $Y \cup X$ is
assigned. (That is before a satisfying assignment is generated.)  Let
\pnt{p}\, be an assignment satisfying $F$ where $\pnt{q} \subseteq
\pnt{p}$. Consider the worst case scenario. That is all variables of
$Y \cup X$ but some variable $w$ are already assigned in \pnt{q} and
no leaf condition is encountered yet. Assume that no literal of $w$ is
present in the target clause \Sub{C}{trg}. Since \pnt{q} contains all
assignments of \pnt{p} but that of $w$, \Sub{C}{trg} is satisfied by
\pnt{q}.  Recall that \prd does not make \ti{decision} assignments
satisfying \Sub{C}{trg} (see Subsection~\ref{ssec:prv_red}). So,
\Sub{C}{trg} is satisfied by an assignment \ti{derived} from a clause
$C$. Then $C$ implies \Sub{C}{trg} in subspace \pnt{q} and a leaf of
the second kind must have been reached. So, we have a contradiction.

Now assume that \Sub{C}{trg} has a literal $l(w)$ of $w$.  Note that
since \prd assigns variables of $Y$ before those of $X$, then $w \in
X$. Since \Sub{C}{trg} is not implied by a clause of $F$ in subspace
\pnt{q}, all the literals of \Sub{C}{trg} but $l(w)$ are falsified by
\pnt{q}.  Let us show that \Sub{C}{trg} is blocked in \prob{X}{F} at
$w$ in subspace \pnt{q}.  Assume the contrary i.e. there is a clause
$C$ resolvable with \Sub{C}{trg} on $w$ that contains the literal
$\overline{l(w)}$ and is not satisfied yet. That is all the literals
of $C$ other than $\overline{l(w)}$ are falsified by \pnt{q}. Then
\pnt{p} cannot be a satisfying assignment because it falsifies either
\Sub{C}{trg} or $C$ (depending on how the variable $w$ is
assigned). So, we have a contradiction.Thus, \Sub{C}{trg} is blocked
at $w$ in subspace \pnt{q} and hence a leaf of the fourth kind is
reached.

\section{Experiments With HWMCC-13 Benchmarks}
\label{app:exper2}
In this appendix, we describe experiments with multi-property
benchmarks of the HWMCC-13 set~\cite{hwmcc13}. (We use this set
because the multi-property track has been discontinued in HWMCC since
2013.) Each benchmark consists of a sequential circuit $N$ and
invariants that are supposed to hold for $N$. One can view the
conjunction of those invariants as a \tb{specification} \spe for $N$.
In the experiments, we used \Apqe to generate invariants of $N$ not
implied by \spe.  Similarly to the experiment of
Section~\ref{sec:exper1}, the formula $F_k = I(S_0) \wedge T(S_0,S_1)
\wedge \dots \wedge T(S_{k-1},S_k)$ was used to generate
invariants. The number $k$ of time frames was in the range of
\mbox{$2\!\leq\!k\!  \leq\!10$}. Specifically, we set $k$ to the
largest value in this range where $|F_k|$ did not exceed 500,000
clauses. We discarded the benchmarks with $|F_2|\!>\!500,000$. So, in
the experiments, we used 112 out of the 178 benchmarks of the set.

We describe three experiments. In every experiment, we generated local
invariants $H_k$ by taking out a clause of \prob{\Abs{k-1}}{F_k}.  The
objective of the \ti{first experiment} was to demonstrate that \Apqe
could compute $H_k$ for realistic designs.  We also showed in this
experiment that PQE could be much easier than QE and that \Apqe
outperforms our previous PQE-solver called \apqe.  The \ti{second
  experiment} demonstrated that a clause $Q$ of a local invariant
$H_k$ generated by \Apqe was often a global invariant not implied by
the specification \spe.  (As we mentioned in
Section~\ref{sec:inv_gen}, the necessary condition for an invariant
$Q$ to be unwanted is $\spe \not\imp Q$.) Note that the circuits of
the HWMCC-13 set are ``\tb{anonymous}''. So, we could not decide if
$Q$ was an unwanted invariant. Our goal was to show that \Apqe was
good enough to generate invariants not implied by \spe. (Then one
could check those invariants for being unwanted as described in
Section~\ref{sec:be_test}.) As in the experiment of
Section~\ref{sec:exper1}, we took out only clauses containing a state
variable of the $k$-th time frame.  The choice of the next clause to
take out was made according to the order in which clauses were listed
in $F_k$. In the \ti{third experiment}, we showed that \Apqe generates
invariants that are different from those produced by \ict.

%
%
\clearpage
\subsection{Experiment 1}
\vspace{-10pt}
\label{ssec:exper1}
%
%
%
\vspace{5pt}
\begin{wraptable}{l}{1.8in}
\centering
\scriptsize
\captionsetup{justification=centering}
\caption{\small{\Apqe and \apqe. The time limit is 5\,sec.}}
\vspace{-5pt}
  \begin{tabular}{|p{25pt}|p{25pt}|p{25pt}|p{25pt}|} \hline pqe &
total & sol- & unsol- \\ solver & probl. & ved & ved \\ \hline
$\mi{start}$ & 5,418 & \tb{3,102} & \tb{2,316} \\ \hline
$\mi{ds}$-$\mi{pqe}$ & 5,418 & 1,285 & 4,133 \\ \hline
\end{tabular}                
\vspace{10pt}
\label{tbl:old_new}
\end{wraptable}

%
In this experiment, for each benchmark out of 112 mentioned above we
generated PQE problems of taking a clause out of
\prob{\Abs{k-1}}{F_k}. Some of them were trivially solved by
pre-processing. The latter eliminated the blocked clauses of $F_k$
that could be easily identified and ran BCP launched due to the unit
clauses specifying the initial state. We generated up to 50
\ti{non-trivial} problems per benchmark ignoring those solved by
pre-processing. (For some benchmarks the total number of non-trivial
problems was under 50.)

We compared \Apqe with \apqe introduced in~\cite{hvc-14} that is based
on the machinery of D-sequents. The relation of D-sequents and
certificates is briefly discussed in Section~\ref{sec:bg}. In contrast
to \Apqe, \apqe proves redundancy of many targets \ti{at once}, which
can lead to generating very deep search trees.  To make the experiment
less time consuming, we limited the run time of \Apqe to 5 sec. per
PQE problem. The results are shown in Table~\ref{tbl:old_new}. The
first column gives the name of a PQE solver. The second column shows
the total number of PQE problems we generated for the 112
benchmarks. The last two columns give the number of problems solved
and unsolved in the time limit. Table~\ref{tbl:old_new} shows that
\Apqe solved 57\% of the problems within 5 sec. For 92 benchmarks out
of 112, at least one PQE problem generated off \prob{\Abs{k-1}}{F_k}
was solved by \Apqe in the time limit. This is quite encouraging since
many solved PQE problems had more than a hundred thousand variables
and clauses.  Table~\ref{tbl:old_new} also shows that \Apqe
drastically outperforms \apqe.

To contrast PQE and QE, we used \cad~\cite{cadet_qe,cadet_imp} to
perform QE on 112 formulas \prob{\Abs{k-1}}{F_k}. That is, instead of
taking a clause out of \prob{\Abs{k-1}}{F_k} by PQE, we applied \cad
to perform full QE on this formula. (As mentioned in
Subsection~\ref{ssec:pg_by_pqe}, performing QE on
\prob{\Abs{k-1}}{F_k} produces the strongest local invariant
specifying all states unreachable in $k$ transitions.) Our choice of
\cad was motivated by its high performance. \cad is a SAT-based tool
that solves QE implicitly via building Skolem functions. In the
context of QE, \cad often scales better than
BDDs~\cite{bryant_bdds1,bryant_bdds2}.
\cad solved only 32 out of 112 QE problems with the time limit of 600
sec. For many formulas \prob{\Abs{k-1}}{F_k} that \cad failed to solve
in 600 sec., \Apqe solved all 50 PQE problems generated off
\prob{\Abs{k-1}}{F_k} in 5 sec. So, PQE can be much easier than QE if
only a small part of the formula gets unquantified.

%
\subsection{Experiment 2}
\label{ssec:exper2}
The second experiment was an extension of the first experiment.
Namely, for each clause $Q$ of a local invariant $H_k$ generated by
PQE we used \ict to verify if $Q$ was a global invariant.  If so, we
checked if $\spe \not\imp Q$ held.

\begin{wraptable}{l}{3.1in}
\centering
\vspace{5pt}
\scriptsize
\caption{\small{A sample of HWMCC-13 benchmarks}}
\vspace{2pt}
\begin{tabular}{|p{20pt}|p{20pt}|p{20pt}|p{20pt}|p{20pt}|p{20pt}|p{17pt}|p{12pt}|p{15pt}|p{20pt}|} \hline
  name     & lat- & invar.   &  time   &clau-   &\multicolumn{5}{c|}{local single-clause invariants} \\ \cline{6-10}
           & ches & of      &  fra-   &ses     & gen.    & \multicolumn{3}{c|}{global?} & not   \\ \cline{7-9}
           &      & \spe   &  mes    &taken   & inva-   & un-  & no  &  yes & impl.   \\
           &      &         &         &out   &  riants       & dec. &     &      & by \ti{Sp}  \\ \hline

6s380  & 5,606 & ~897 & ~~2 & ~46  & ~101  & ~0  & ~49 & ~52  & ~0 \\ \hline    
6s176  & 1,566 & ~952 & ~~3 & ~20  & ~101  & ~0  & ~9  & ~92  & ~14  \\ \hline
6s428  &3,790  & ~340 & ~~4 & ~29  & ~102  & ~15 & ~12 & ~75  & ~75 \\ \hline
6s292  &3,190  & ~247 & ~~5 & ~21  & ~104  & ~44 & ~0  & ~60  & ~60 \\ \hline
6s156  & 513   & ~32  & ~~6 & ~218 & ~101  & ~0  & ~90  & ~11  & ~11  \\ \hline
6s275  & 3,196 & ~673 & ~~7 & ~25  & ~106  & ~2  & ~21 & ~83  & ~77 \\ \hline
6s325  &1,756  & ~301 & ~~8 & ~23  & ~105  & ~0  & ~0  & ~105 & ~105 \\ \hline
6s391  &2,686  & ~387 & ~~9 & ~30  & ~104  & ~0  & ~14  & ~90 & ~90 \\ \hline
6s372  &1,124  & ~33  & ~10 & ~159 & ~101  & ~60 & ~41 & ~0  & ~0 \\ \hline
\end{tabular}                
\vspace{5pt}
\label{tbl:sample}
\end{wraptable}

Similarly to the first experiment, to make the experiment less time
consuming, we set the time limit of 5 sec. per PQE problem.  Besides,
we imposed the following constraints.  (Even with those constraints,
the run time of the experiment was about 4 days.) First, we stopped
\Apqe even before the time limit if it generated more than 5 free
clauses. Second, the time limit for \ict was set to 30 sec.  Third,
instead of constraining the number of PQE problems per benchmark, we
limited the total number of free clauses generated for a benchmark.
Namely, processing a benchmark terminated when this number exceeded
100.

A sample of 9 benchmarks out of the 112 we used in the experiment is
shown in Table~\ref{tbl:sample}. Let us explain the structure of this
table by the benchmark 6s380 (the first line of the table). The name
of this benchmark is shown in the first column. The second column
gives the number of latches (5,606). The number of invariants that
should hold for 6s380 is provided in the third column (897). So, the
specification \spe of 6s380 is the conjunction of those 897
invariants. The fourth column shows that the number $k$ of time frames
for 6s380 was set to 2 (since $|F_3|\!>\!500,000$).  The value 46
shown in the fifth column is the total number of clauses taken out of
\prob{\Abs{k-1}}{F_k} i.e. the number of PQE problems.  (We keep using
the index $k$ here assuming that $k\!=\!2$ for 6s380.)

Let $C$ be a clause taken out of the scope of quantifiers by
\Apqe. Every free clause $Q$ generated by \Apqe was stored as a local
single-clause invariant.  The sixth column shows that taking clauses
out of the scope of quantifiers was terminated when 101 local
single-clause invariants were generated.  (Because the total number of
invariants exceeded 100.) Each of these 101 local invariants held in
$k$-th time frame. The following three columns show how many of those
101 local invariants were true globally. \ict finished every problem
out of 101 in the time limit. So, the number of undecided invariants
was 0. The number of invariants \ict proved false or true globally was
49 and 52 respectively. The last column gives the number of global
invariants \ti{not} implied by \spe. For 6s380, this number is 0.

For 109 benchmarks out of the 112 we used in the experiments, \Apqe
was able to generate local single-clause invariants that held in
$k$-th time frame. For 100 benchmarks out of the 109 above, the
invariants $H_k$ generated by \Apqe contained global single-clause
invariants.  For 89 out of these 100 benchmarks, there were global
invariants not implied by the specification \spe. Those invariants
were meant to be checked if any of them was unwanted.
\clearpage
\subsection{Experiment 3}
\label{ssec:exper3}

When proving an invariant $P$, \ict conjoins it with clauses
$Q_1\!,\dots,\!Q_m$ to make $P\!\wedge Q_1\!\wedge \dots \wedge Q_m$
inductive.  If \ict succeeds, every $Q_i$ is an invariant. Moreover,
$Q_i$ may be an \ti{unwanted} invariant.  Arguably, the cause of
efficiency of \ict is that $P$ is often close to an inductive
invariant. So, \ict needs to generate a relatively small number of
clauses $Q_i$ to make the constrained version of $P$
inductive. However, as we mentioned in Section~\ref{sec:bg}, this nice
feature of \ict drastically limits the set of unwanted invariants it
can produce. In this subsection, we substantiate this claim by an
experiment. In this experiment, we picked the HWMCC-13 benchmarks for
which one could prove \ti{all} pre-defined invariants $P_1, \dots,
P_n$ within a time limit. Namely, for every benchmark we formed the
specification \mbox{\spe = $P_1 \wedge \dots \wedge P_n$} and ran \ict
to prove \spe true.

We selected the benchmarks that \ict solved in less than 1000 sec.
(In addition to dropping the benchmarks not solved in 1000 sec., we
discarded those where \spe failed because some invariants $P_i$ were
false). Let \bm{\spe^*} denote the inductive version of \spe produced
by \ict when proving \spe true.  That is, $\spe^*$ is \spe conjoined
with the invariant clauses generated by \ict.
For each of the selected benchmarks we generated invariants by \Apqe
exactly as in Experiment 2. That is, we stopped generation of local
single clause invariants when their number exceeded 100. Then we ran
\ict to identify local invariants that were global as well.  After
that we checked which of the global invariants generated by \Apqe were
not implied by \spe and $\spe^*$.

\begin{wraptable}{l}{2.5in}
\centering
\vspace{5pt}
\scriptsize
\caption{\small{Invariants of \Apqe and \ict}}
\vspace{2pt}
\begin{tabular}{|p{23pt}|p{20pt}|p{20pt}|p{20pt}|p{20pt}|p{24pt}|} \hline
  name     & lat- & inva-  &\multicolumn{3}{c|}{glob sngl cls invars} \\ \cline{4-6}
           & ches & ri-    & glob.    & not & not   \\
           &      & ants    & inva-   & impl.   & impl.  \\
           &      & in \spe &  riants & by $\mi{Sp}$  &by  $\mi{Sp}^*$  \\ \hline

6s135  & 2,307  & 340 &  68 &   68   & 61 \\ \hline
6s325  & 1,756  & 301 & 101 &   101  & 96 \\ \hline
ex1    & 130    & 33  &  29 &   21   & 19 \\ \hline
ex2    & 212    & 32  &  93 &   61   & 42 \\ \hline
6s106  & 135    & 17  & 100 &   86   & 83 \\ \hline
6s256  & 3,141  & 5   &  0  &    0   &  0 \\ \hline
ex3    & 61     & 3   &  2  &   2    & 2  \\ \hline
ex4    & 63     & 3   &  3  &   3    & 3  \\ \hline
6s209  & 5,759  & 2   & 73  &   72   & 66 \\ \hline
6s113  & 994    & 1   &  18 &   17   & 17 \\ \hline
6s143  & 260    & 1   & 103 &   83   & 77 \\ \hline
6s170  & 3,141  & 1   &  1  &    1   &  1 \\ \hline
6s252  & 170   & 1   & 94  &   71   & 65 \\ \hline \hline
\tb{Total}  &       &     &     &  \tb{586}  & \tb{532} \\ \hline

\end{tabular}                
\vspace{5pt}
\label{tbl:ic3_start}
\end{wraptable}

The results of the experiment are shown in Table~\ref{tbl:ic3_start}.
The first three columns of this table are the same as in
Table~\ref{tbl:sample}. They give the name of a benchmark, the number
of latches and the number of invariants $P_1$,$\dots$,$P_n$ to
prove. (The actual names of examples \ti{ex1},..,\ti{ex4} in the
HWMCC-13 set are \ti{pdtvsarmultip}, \ti{bobtuintmulti},
\ti{nusmvdme1d3multi}, \ti{nusmvdme2d3multi}.) The next two columns of
Table~\ref{tbl:ic3_start} are the same as the last two columns of
Table~\ref{tbl:sample}. They show the number of local invariant
clauses that turn out to be global invariants and the number of global
invariants that were not implied by \spe.
The last column gives the number of global invariants that were not
implied by $\spe^*$. The last row of the table shows that in 532 cases
out of 586 the invariants not implied by \spe were not implied by
$\spe^*$ either. So, in 90\% of cases \Apqe generated invariant
clauses \ti{different} from those of \ict.

\end{document}